\begin{document}

\title{Linear Search for Capturing an Oblivious Mobile Target in the Sender/Receiver Model\thanks{A preliminary version of this paper appears in the Proceedings of the International Symposium on Algorithmics of Wireless Networks (ALGOWIN 2025).}}

\author{
Khaled Jawhar\inst{1}\and
Evangelos Kranakis\inst{1}\inst{2}
}

\institute{
School of Computer Science, Carleton University, Ottawa, ON, Canada.
\and
Research supported in part by NSERC Discovery grant.
}
\maketitle
\begin{abstract}
We consider linear search for capturing an oblivious moving target by two autonomous robots with different communicating abilities. Both robots can communicate Face-to-Face (F2F) when co-located but in addition one robot is a Sender (can also send messages wirelessly) and the other also a Receiver (can also receive messages wirelessly). This is known as Sender/Receiver (S/R, for short) communication model. The robots can move with max speed $1$. The moving target starts at distance $d$ from the origin and can move either with speed $v<1$ away from the origin in the ``away'' model or with speed $v \geq 0$ toward the origin in the ``toward'' model. We assume that the direction of motion of the target (i.e., whether it is the away or toward model) is known to the robots in advance. To capture the target the two robots must be co-located with it. 

We design new linear search algorithms and analyze the competitive ratio of the time required to capture the target. The approach takes into account various scenarios related to what the robots know about the search environment (e.g., starting distance or speed of the mobile, away or toward model, or a combination thereof). Our study contributes to understanding how asymmetric communication affects the competitive ratio of linear search. 

\vspace{.1cm}
\noindent
{\bf Key words and phrases.} Autonomous robot, Capture, Competitive ratio, Sender/Receiver(S/R), Knowledge, Oblivious target, Searcher, Speed.

\vspace{.1cm}
\noindent
{\bf Khaled Jawhar is eligible for the best student paper award.}
\end{abstract}

\section{Introduction}

Linear search and evacuation by one or more autonomous mobile robots (or agents) have been studied extensively and are applicable to many areas of theoretical computer science, including data mining, crawling, and surveillance, thus making them an area of significant interest. Linear search was first proposed for stochastic and game theoretic systems in \cite{beck1964linear,bellman1963optimal}, independently. Deterministic search by a single robot operating on the line was subsequently investigated by several researchers including~\cite{baezayates1993searching,baezayates1995parallel} and in other topologies, like star in \cite{gal1974minimax}. Additional work on linear search can also be found in \cite{alpern03}.

Linear search with multiple robots is also referred to as group search and is an important task arising from the need to design algorithms for multi-agent systems. More recently, group search has attracted the attention of researchers in distributed computing in order to understand the impact of communication faults~\cite{DBLP:journals/ijfcs/CzyzowiczGKKNOS21,PODC16} on linear search. In fact, one wants to know how the knowledge the robots have about the search environment affects the competitive ratio of linear search~\cite{kranakis2024survey}. In particular, there is interest in designing algorithms and analyzing tradeoffs involving time, mobility, and communication model for finding a target. Proposed algorithms employ cooperating, communicating autonomous mobile agents in a distributed setting and operate over continuous domains (typically the infinite line). In such settings, designed algorithms should be fault-tolerant and their performance is measured by the competitive ratio. 


The focus of the present paper is on linear search for capturing a mobile target; it involves two cooperating searchers communicating under the S/R model, whereby both robots can communicate in the F2F model but in addition there is communication asymmetry in that one robot is a Sender (can also send messages wirelessly) and the other also a Receiver (can also receive messages wirelessly). The S/R model was initiated in \cite{czyzowicz2021groupevac} in order to understand the impact of asymmetric communication. A key aspect of our analysis will involve comparing an algorithm designed for a specific knowledge model, where the robots have limited information about the input, to a full knowledge model, where the robots know everything about the input. Such information may include the half-line where the mobile started, its speed, direction of movement, or distance from the origin.




\subsection{Model, Preliminaries, and Notation}

The search domain is the bidirectional infinite line, in which the robots can move in either direction without affecting their speeds. The robots start at the origin and can travel with maximum speed $1$. The target is oblivious, starts at distance $d$ from the origin and can move either away from or toward the origin with a maximum speed $v$. We abbreviate mobile target either as mobile or target. If the target is moving away from the origin, we also assume that $v < 1$; if it is moving toward the origin, we allow $v \geq 0$ to be arbitrary. The robots and the mobile start at the same time. If $v=0$ the target is static, so our approach generalizes linear search case for a static target. We consider three knowledge models, cf.,~\cite{coleman2023line}, for the robots: in the NoDistance, $v$ is known and $d$ unknown, in the NoSpeed, $v$ is unknown and $d$ known, and in the NoKnowledge, neither $v$ nor $d$ is known. 

A search algorithm for two robots is a complete description of their trajectories. For an algorithm $A$ and an input instance $I$ of the problem in the knowledge model, $T_A(I)$ is the time it takes algorithm $A$ to solve instance $I$. If $T_{opt}(I)$ is the optimal time of an offline algorithm for the same instance $I$, then the competitive ratio for the capture time of an online algorithm $A$ is defined by the ratio $CR_A := \sup_{I} T_A(I)/T_{opt}(I).$  If ${\cal A} $ is a class of algorithms solving an online version then its competitive ratio is defined by $CR_{{\cal A}}:= \inf_{A \in {\cal A}} CR_A$; we omit the subscripts $A$ and ${\cal A}$ when they are easily understood from context. When the target is static it is customary to refer to capture as evacuation. Our goal is to design capture algorithms that achieve the best competitive ratio.

The robots can always communicate F2F (only when colocated). Additional communication is possible but is asymmetric in that one of the robots can also send communication wirelessly and is designated as the sender (denoted by $S$), while the other robot is designated as the receiver (denoted by $R$) and can also receive information wirelessly. Both robots are equipped with pedometers and computing capabilities, allowing them to deduce the location of the other robot from relevant communications exchanged. $S$ and $R$ can't switch roles.

\subsection{Related Work}


Linear search in a distributed setting with multiple robots subjected to possible crash faults was initiated in~\cite{DBLP:journals/dc/CzyzowiczKKNO19} and for Byzantine faults in~\cite{DBLP:journals/ijfcs/CzyzowiczGKKNOS21}. One aspect of our current research is related to the communication model being used. The S/R communication model was introduced for an infinite line in~\cite{czyzowicz2021groupevac};
for the case of two robots and an unknown static (non-mobile) target, it was shown that there exists an evacuation algorithm with competitive ratio $3 + 2\sqrt{2}$, which was also proven to be optimal among all possible linear search algorithms. Also related is the work of \cite{algowinJawharK24} which studies bike-assisted evacuation in the S/R model. The interested reader can find additional discussion and review of recent research on other search models in \cite{kranakis2024survey}.

Another aspect of the present research is related to understanding how the mobility of the target affects the competitive ratio of linear search. McCabe \cite{mccabe1974} was first to investigate this problem in a stochastic setting whereby the mobile follows a Bernoulli random walk on the integers. In the deterministic (continuous) setting Alpern and Gal~\cite{alpern03}[p. 134, Equation 8.25] were the first to compute the optimal competitive ratio of search as $1 + \frac{8(1+v)}{(1-v)^2}$ for a target moving with speed $0 \leq v <1$. Extensions of this work, were investigated in~\cite{coleman2023line} where also competitive ratio tradeoffs were analyzed depending on whether the target is moving toward or away from the origin of the inifinte line. For each of these two cases four subcases arise depending on which of the two parameters $d$ and $v$ are known to the searcher. Recent extensions of this work can be found in~\cite{ivanov24} where tight (in fact, optimal) bounds are obtained when the target is moving away from the origin. 






The goal of our present paper is twofold. First we generalize the work of~\cite{czyzowicz2021groupevac} from a static target to a mobile target, and second we extend   the analysis of~\cite{coleman2023line}
from one to two robots in the asymmetric S/R model. We extend the work of~\cite{czyzowicz2021groupevac} by generalizing their model to accommodate a moving target, where their setting becomes a special case of ours when the target speed is zero. With respect to~\cite{coleman2023line}, we introduce new algorithms under an asymmetric Sender/Receiver (S/R) communication model involving two robots. This allows us to achieve improved competitive ratios compared to the single-robot setting analyzed in~\cite{coleman2023line}, particularly in scenarios where the robots lack information about the target’s distance, speed, or direction of motion. To this end we give new algorithms and study competitive ratio tradeoffs for linear search in several search models which take into account knowledge of aspects of the target's mobility by two autonomous robots.  To the best of our knowledge this has not been investigated before.

\subsection{Outline and Results of the Paper}


The NoDistance model ($v$ known, $d$ unknown) is studied in Section~\ref{sec: The NoDistance Model}. The main algorithm in the toward model consists of three subalgorithms with respective competitive ratios as follows:  $\frac{\sqrt{v^2+2v+2}+1}{\sqrt{v^2+2v+2}-1}$ for Algorithm~\ref{NoDistanceTowardOppositeDirectionAlgorithm},  $ 1 + \frac{2(a^5 + a^4)}{a^4 + v a^4 + 2a v +v - 1}$ for Algorithm~\ref{ZigZagTillMeetingAlgorithm}, where $a$ is the positive solution of Equation~\eqref{eq:solution}, and $1+\frac1v$ for the waiting algorithm. In summary, Algorithm~\ref{NoDistanceTowardOppositeDirectionAlgorithm} performs best for $0 \leq v \leq \frac{1}{3}$, and the waiting algorithm outperforms the first one when $\frac{1}{3}\leq v \leq 1$. Figure~\ref{Performance} displays the range of $v$ within which the algorithms are optimal. The main Algorithm \ref{NoDistanceAwayS/RAlgorithm} in the away model has competitive ratio $\frac{\sqrt{v^2-2v+2}+1}{\sqrt{v^2-2v+2}-1}$ (which for $v=0$ is equal to $3+2\sqrt{2}$, thus generalizing the upper bound in~\cite{czyzowicz2021groupevac} for arbitrary $v<1$.)

The NoSpeed model ($v$ unknown, $d$ known) is studied in Section~\ref{sec: The NoSpeed Model}. For the toward model Algorithm~\ref{NoSpeedTowardS/RAlgorithm} has competitive ratio $3$, while for the away model Algorithm \ref{NoSpeedAwayS/RAlgorithm} has competitive ratio $1+O(u^{\frac{10}{3}} \log u )$, where $u = \frac1{1-v}$. (The latter result is rather surprising given the communication asymmetry of the two searchers and should be compared to the single searcher case, where combining~\cite{coleman2023line} with the more recent~\cite{ivanov24} the tight bounds $O( u^{4-(\log_2 \log_2 u)^{-2}})$, if $u>4$ and $\Omega (u^{4-\epsilon})$, for any $\epsilon >0$ were obtained.) We also prove in Theorem~\ref{thm:NoSpeedAwayS/RAlgorithmLB} that $1 + \frac2{1-v}$, is a lower bound for the competitive ratio of any algorithm in the NoSpeed away model.

Finally, the NoKnowledge model (both $d,v$ are unknown) is studied in Section~\ref{sec: The NoKnowledge Model}. In the toward model the optimal competitive ratio is $1+\frac{1}{v}$. In the away model we prove that $1 + O( M^{\frac{16}{3}} \log(M) \log \log^{\frac{3}{2}} M )
$ is an upper bound for the competitive ratio, where $M = \max\{d,\frac{1}{1-v}\}$. 

\section{The NoDistance Model}
\label{sec: The NoDistance Model}

In this section, we consider the NoDistance model, in which $d$ is unknown, but $v$ is known to the robots. We distinguish the cases where the target is moving away from or toward the origin.

\subsection{Target moving toward the origin}
The target moves toward the origin from either direction at any speed $v$. Agents $S$ and $R$ can move in any direction with a maximum speed of 1. Three algorithms are considered, two of which perform best over different ranges of the target's speed $v$.

\subsubsection{Opposite Direction Algorithm.}
Assume that $R$ moves in one direction with a maximum unit speed, and $S$ moves in the other direction with speed $u$. Two cases are considered depending on which robot finds the target first. If $S$ finds the target first, then it informs $R$, which proceeds to the target. Otherwise, if $R$ finds the target first, it switches its direction to catch up to $S$, and then both robots move to the target with unit speed.

\begin{algorithm}[H]
\caption{NoDistanceTowardOppositeDirection
}
\label{NoDistanceTowardOppositeDirectionAlgorithm}
\begin{algorithmic}[1]
\State {$S$ moves left with speed $u$ and
$R$ moves right with speed $1$;}
\If {$R$ finds the target first}
    \State{It changes direction and catches up to $S$;}
    \State{The two robots change direction and move toward the target with unit speed;}
\Else
    \If {$S$ finds the target first} 
        \State{It informs $R$ that the target has been found and moves with the target;}
        \State{$R$ switches direction and moves with unit speed to the target;}
    \EndIf
\EndIf
\end{algorithmic}
\end{algorithm}


\begin{theorem}
\label{NoDistanceTowardOppositeDirectionTheorem}
The competitive ratio of Algorithm~\ref{NoDistanceTowardOppositeDirectionAlgorithm} is 
\begin{equation}
\label{eq:alg1}
\frac{\sqrt{v^2+2v+2}+1}{\sqrt{v^2+2v+2}-1}.
\end{equation}
\end{theorem}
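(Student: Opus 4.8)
The plan is to set up the worst-case instance explicitly, compute the capture time of Algorithm~\ref{NoDistanceTowardOppositeDirectionAlgorithm} as a function of the starting position of the target, the known speed $v$, and the free parameter $u$, and then optimize over $u$. First I would observe that because $d$ is unknown but $v$ is known, the adversary controls $d$ and the half-line on which the target starts (left or right). Since $R$ goes right at speed $1$ and $S$ goes left at speed $u\le 1$, the ``hard'' placement is the one that makes the slower-covering branch responsible for the capture: the target on $S$'s side (the left). In that branch $S$ meets the target; I would write the meeting time $t_1$ from the equation $u t_1 = d - v t_1$ (target moving toward the origin from distance $d$), giving $t_1 = d/(u+v)$ and meeting point at distance $x_1 = u d/(u+v)$ from the origin. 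Then $R$, which at time $t_1$ is at distance $t_1$ on the opposite side, must reverse and travel to the target; meanwhile the target keeps moving toward the origin at speed $v$. I would solve the second chase equation $t_1 + (t_1 + x_1 - v\,(t - t_1)) \cdot \frac{1}{1} \text{-type relation}$ — more carefully, $R$ starts the chase at distance $t_1 + x_1$ behind the (toward-moving) target and closes the gap at relative speed $1+v$ — to get the total capture time $T_S(d)$, which will be linear in $d$.

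Next I would handle the symmetric placement where the target is on $R$'s side: $R$ meets it at time $d/(1+v)$, must then turn around and catch up to $S$ (who is at distance $u\,d/(1+v)$ on the far side), and finally both walk to the target; again this yields a capture time linear in $d$, call it $T_R(d)$. The offline optimum for a target starting at distance $d$ and moving toward the origin at speed $v$ is $T_{opt}(d) = d/(1+v)$ (two unit-speed robots going straight at it from the origin, closing at relative speed $1+v$). Because all three quantities scale linearly in $d$, the ratios $T_S(d)/T_{opt}(d)$ and $T_R(d)/T_{opt}(d)$ are constants in $d$; the competitive ratio of the algorithm for a fixed $u$ is the max of these two constants, and the adversary will pick whichever is larger. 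The standard trick then is to choose $u$ to balance the two branches, i.e. solve $T_S/T_{opt} = T_R/T_{opt}$ for $u$; this should produce $u = \sqrt{v^2+2v+2}-(v+1)$ or some algebraically equivalent expression, and substituting back gives the claimed value $\frac{\sqrt{v^2+2v+2}+1}{\sqrt{v^2+2v+2}-1}$. A sanity check at $v=0$ should recover $3+2\sqrt2$, consistent with \cite{czyzowicz2021groupevac}.

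The main obstacle I anticipate is bookkeeping in the chase phases rather than any conceptual difficulty: one must be careful that during $R$'s reversal the target continues to move (so the gap $R$ must close is not static), that $S$ in the second scenario is itself a moving point when $R$ tries to rendezvous, and that the ``informs $R$'' step in the $S$-finds-first branch is instantaneous by the wireless sender capability (this is where the S/R asymmetry is used — $S$ can broadcast, so $R$ learns the location without a physical meeting). I would also need to verify that the balancing value of $u$ lies in $[0,1]$ for all $v\in[0,1)$ and that no intermediate sub-case (e.g. the target crossing the origin before being met, or $R$ overtaking $S$ before $S$ has found the target) changes which branch is worst; a short monotonicity argument in $d$ and a check of the boundary $d\to 0$ should dispatch these. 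Once the two linear capture-time formulas and $T_{opt}=d/(1+v)$ are in hand, the optimization is a one-line calculation.
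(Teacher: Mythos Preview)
Your proposal is correct and follows essentially the same approach as the paper's proof: compute the capture time in each of the two cases (target on $S$'s side vs.\ $R$'s side), observe that both resulting ratios are constants in $d$ since everything scales linearly, note that one ratio is decreasing and the other increasing in $u$, and balance them to obtain $u=\sqrt{v^2+2v+2}-(v+1)$ and hence the stated competitive ratio. The paper carries out exactly this computation, arriving at the two expressions $1+\tfrac{2}{u+v}$ and $\tfrac{u-uv+3+v}{(1+v)(1-u)}$ and equating them; your bookkeeping concerns (instantaneous wireless notification from $S$, moving target during chase phases, monotonicity justifying the balance) are precisely the points the paper handles, and your $v=0$ sanity check is apt.
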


\begin{proof} (Theorem~\ref{NoDistanceTowardOppositeDirectionTheorem})
Assume that $S$ moves with speed $u$ (in the course of the proof we will determine the optimal $u$) and $R$ with unit speed. There are two cases to consider depending on which of $S$ or $R$ finds the target first.
We fix the convention that movement to the right corresponds to the positive x-axis, and movement to the left corresponds to the negative x-axis.
\begin{itemize}
    \item \textbf{$S$ finds the target first.}
    For $S$ to find the target, it needs time $\frac{d}{u + v}$. At that point, the receiver is at a distance of $\frac{d}{u + v} + \frac{du}{u + v}$ from the sender.
    The competitive ratio becomes:
    \begin{align}
        CR &= \label{eq0} \frac{\frac{d}{u+v}}{\frac{d}{1+v}} + \frac{\frac{\frac{d}{u+v}+\frac{du}{u+v}}{1+v}}{\frac{d}{1+v}} = \frac{2+v+u}{u+v}= 1 + \frac{2}{u + v}.
    \end{align}
    \item \textbf{$R$ finds the target first.} The time needed for $R$ to reach the target is
$\frac{d}{1+v}$.
    At this time, $S$ would be away from the origin by a distance $\frac{du}{1+v}$. Thus, for $R$ to capture $S$, it needs time:
    \begin{equation}
        \label{eq2}
        \frac{\frac{d}{1+v}+\frac{du}{1+v}}{1-u} = \frac{d+du}{(1+v)(1-u)}.
    \end{equation}
    Both robots would then be away from the target by:
    \begin{align*}
        \frac{d+du}{1+v} + \frac{du+du^2}{(1+v)(1-u)} - \frac{dv+duv}{(1+v)(1-u)} = \frac{du - dv - duv + d}{(1+v)(1-u)}.
    \end{align*}
    Thus, for both robots to catch the target, they need time:
    \begin{equation}
        \label{eq3}
        \frac{du - dv - duv + d}{(1+v)^2(1-u)}.
    \end{equation}
\end{itemize}

Combining Equations~\eqref{eq2}, and~\eqref{eq3}, the competitive ratio becomes:
\begin{align}
    CR &= \label{eq0a} 1 + \frac{1+u}{1-u} + \frac{u-v-uv+1}{(1+v)(1-u)} = \frac{u-uv+3+v}{(1+v)(1-u)}.
\end{align}

As we will justify below, the optimal value of $u$ is found by solving 
$\frac{u - uv + 3 + v}{(1 + v)(1 - u)} = \frac{2 + v + u}{u + v}$.

We note that both competitive ratio functions derived for the two cases are monotonic in $u$.
Specifically, in the first case where $S$ finds the target first, the competitive ratio $\frac{2+v+u}{u+v}$
  is decreasing in $u$ for $u>0$.
In the second case where $R$ finds the target first, the competitive ratio $\frac{u-uv+3+v}{(1+v)(1-u)}$ is increasing in $u$ for $0<u<1$.
Therefore, equating the two expressions gives the unique value of $u$ that minimizes the maximum competitive ratio between the two cases.
Solving this equation, we obtain
$
2u^2 + 4u + 4uv - 2 = 0,
$
whose solution is
$
u = \sqrt{v^2+2v+2} - v - 1.
$
Substituting $u$ into the competitive ratio formula, we get Equation~\eqref{eq:alg1}.
This proves the theorem.~\qed
\end{proof}

\subsubsection{ZigZag Till Meeting Algorithm.}
Both robots iterate a ZigZag strategy. They start at the origin. During the first iteration, robot $S$ moves in one direction a distance of $x_{0}$, while robot $R$ moves in the opposite direction a distance of $x_{1}$. If neither robot finds the target during this iteration, they reverse their directions and return to meet at some meeting point. If one robot finds the target, it stays with the target. If, upon reversing, one robot does not meet the other at the predetermined meeting point, the robot that arrives at the meeting point will proceed in the direction of the missing robot to catch up with the target.
The algorithm is formalized as follows:
\begin{algorithm}[H]
\caption{ZigZagTillMeeting}
\label{ZigZagTillMeetingAlgorithm}
\begin{algorithmic}[1]
\For{$i \gets 0$ to $\infty$ \textbf{until} the target is found}

\Statex \textbf{Sender's moves:}
  \State $S$ moves to the left a distance $x_{2i}$ unless the target is found;
  \If{the target is not found}
    \State $S$ reverses direction and moves back to meet $R$;
    \If{$S$ does not meet $R$ on the way}
       \State $S$ continues in the same direction until it catches the target;
    \Else
       \State $S$ reverses direction after meeting $R$ and sets its next travel distance to $x_{2i+2}$;
    \EndIf
  \EndIf
  \If{$S$ reaches the target}
     \State $S$ notifies $R$ that it has reached the target;
     \State $S$ stays with the target and moves with it;
  \EndIf

  \vspace{0.5em} 

\Statex \textbf{Receiver's moves:}
  \State $R$ moves to the right a distance $x_{2i+1}$ unless the target is found;
  \If{the target is not found}
    \State $R$ reverses direction and moves back to meet $S$;
    \If{$R$ does not meet $S$ on the way}
       \State $R$ continues in the same direction until it catches the target;
    \Else
       \State $R$ reverses direction after meeting $S$ and sets its next travel distance to $x_{2i+3}$;
    \EndIf
  \EndIf
  \If{$R$ reaches the target}
     \State $R$ stays with the target and moves with it;
  \EndIf
  \If{$R$ receives notification from $S$ that it has reached the target}
     \State $R$ reverses direction and proceeds to the target;
  \EndIf

\EndFor
\end{algorithmic}
\end{algorithm}

If neither robot finds the target during the first iteration, they proceed to the second iteration.  In this iteration, robot $S$ moves a distance of $x_{2}$, and robot $R$ moves a distance of $x_{3}$. This pattern continues for subsequent iterations $i$, where, in each iteration $i$, robot $S$ moves a distance of $x_{2i}$, and robot $R$ moves a distance of $x_{2i+1}$. We define a sequence of distances $x_{i}$ for $i \geq 0$ such that $x_{i} = a^{i+1}$, where $a$ is a non-negative real number. During each iteration $i$, the robots meet at predetermined points $y_{i}$. We interpret \( y_i \) as the position on the line where both robots meet during iteration \( i \). The origin is set at \( y_0 = 0 \).
The sequence of points $y_{i}$ is calculated based on the assumption that the robots meet at $y_{i-1}$ during iteration $i-1$ and plan to meet at $y_{i}$ during iteration $i$. This leads to the following recurrence relation: $x_{2i-1} - y_{i-1} + x_{2i-1} - y_{i} = y_{i-1} + 2x_{2i-2} + y_{i}$, which after simplification reduces to:
$y_{i} = x_{2i-1} - x_{2i-2} - y_{i-1}$.
The algorithm terminates when one of the robots finds the target. At that point, the other robot realizes the target has been found, as it does not encounter its counterpart at the expected intersection point. The second robot then proceeds directly toward the target. Although the sequence $y_i$ is not used in the proof below, it is worth mentioning that it represents the points at which both robots meet during each iteration, and can be useful to understand the algorithm’s behavior.
\begin{theorem}
\label{thm:ZigZagTillMeetingAlgorithm}    
The competitive ratio of Algorithm~\ref{ZigZagTillMeetingAlgorithm} is upper bounded by
\begin{align}
\label{eq:zigzag1}
    1 + \frac{2(a^5 + a^4)}{a^4 + v a^4 + 2a v + v - 1}.
\end{align}
where $a$ (as a function of $v$) is the positive root of the following equation
\begin{align}
\label{eq:solution}
    (1 + v) a^5 + 8v a^2 + (11v - 5) a + 4(v - 1)=0.
\end{align}
\end{theorem}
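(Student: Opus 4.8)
\medskip
\noindent\textbf{Proof plan.} The plan is to isolate a one-parameter family of worst-case instances, write the algorithm's capture time $T_A$ and the offline optimum $T_{opt}$ as explicit functions of the branching parameter $a$ and the known speed $v$ along that family, pass to the limit, and then minimise the resulting competitive-ratio expression over $a$. First I would reduce to the worst case: every distance used by the algorithm is a fixed multiple of the geometric sequence $x_i=a^{i+1}$, and $T_{opt}$ scales linearly with $d$, so the competitive ratio is scale-invariant and depends only on where $d$ sits relative to the turning points; hence it suffices to consider, for each $n$, the instance in which the robot searching toward the target reaches its turning point in iteration $n$ with the target an infinitesimal distance beyond it (so the robot misses, turns back, and captures the target during iteration $n+1$), and to let $n\to\infty$. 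One must also fix the target's side: if it is on the Sender's side, then once $S$ reaches it, $S$ broadcasts and the Receiver heads straight for the known (moving) target; if it is on the Receiver's side, $R$ cannot broadcast, so $S$ learns of the capture only on returning and failing to meet $R$, and therefore pays for a full wasted leg before turning toward the target. I would carry out the computation for whichever side gives the larger ratio (the Receiver's, because of the wasted leg) and note that the other side is bounded by the same expression.

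Next I would compute $T_A$ along this family and divide by $T_{opt}$. Through iteration $n$ the finding robot performs out-and-back legs whose lengths follow a subsequence of $x_i=a^{i+1}$, so its cumulative travel is a geometric-type sum of order $a^{2n}$ times a rational function of $a$ (since the meeting points $y_i$ are not needed, one bounds the legs directly rather than tracking the drift exactly); to this one adds the displacement the target has accumulated toward the origin, namely $v$ times the elapsed time, which shortens the final capture leg relative to $d$ --- this is precisely the term absent from a static cow-path analysis and the one that makes $v$ enter the ratio non-trivially --- together with the trailing robot's travel to the moving target, at most one further iteration's worth of distance. On the other side $T_{opt}=d/(1+v)$, obtained by sending both offline robots straight at the target: each travels $t$, the target travels $vt$, and they coincide when $t(1+v)=d$. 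Substituting $d$ in terms of $a^{2n}$ for the worst-case family and letting $n\to\infty$, the ratio collapses after simplification to $1+\frac{2(a^5+a^4)}{a^4+va^4+2av+v-1}$, i.e.\ the right-hand side of~\eqref{eq:zigzag1}.

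Finally I would optimise over the design parameter $a$. Differentiating the last expression with respect to $a$ and clearing denominators, the stationarity condition reduces exactly to $2a^3\big[(1+v)a^5+8va^2+(11v-5)a+4(v-1)\big]=0$, so the minimum over $a$ is attained at the positive root of the quintic~\eqref{eq:solution}. A short sign analysis identifies the relevant root: for $v<1$ the quintic equals $4(v-1)<0$ at $a=0$, tends to $+\infty$ as $a\to\infty$, and has a derivative that is monotone increasing on $(0,\infty)$, so it changes sign exactly once; moreover it equals $24v-8$ at $a=1$, which is negative precisely when $v<\tfrac13$, exactly the regime in which the optimal $a$ exceeds $1$ and the search is genuinely expanding (the boundary value $a=1$ makes the displayed ratio $1+\tfrac1v$, so the interior optimum improves on the waiting algorithm there). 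Substituting the root into the displayed ratio gives the bound~\eqref{eq:zigzag1}; ``upper bounded'' rather than ``equal to'' reflects the infinitesimal slack in the worst-case family together with the over-estimation in the wasted-leg and trailing-robot terms.

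The step I expect to be the main obstacle is the bookkeeping in $T_A$: summing the finding robot's out-and-back legs and the synchronising motion forced by the joint meetings, while simultaneously discounting the final capture leg by the target's cumulative displacement $v\,T_A$ and handling the asymmetric follow-up (wireless redirection of $R$ versus the Sender's delayed detection and wasted leg), all precisely enough that the ratio simplifies to exactly~\eqref{eq:zigzag1} and its derivative to exactly the quintic~\eqref{eq:solution}. A secondary point needing care is confirming that ``miss at the turning point of iteration $n$, capture in iteration $n+1$'' is genuinely extremal --- that the ratio is monotone in the continuous position of the target within each iteration --- so that the supremum is attained in the limit of this family rather than at a mid-iteration capture.
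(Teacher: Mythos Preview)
Your proposal is essentially the same approach as the paper's proof: identify the worst case as the target being just missed at a turning point by $R$, write the capture time as geometric-type sums in $a$, bound using the just-missed condition, pass to the limit, and optimise over $a$ to obtain the quintic~\eqref{eq:solution}. Your explicit justification for why the Receiver's side is worst (the Sender's wasted leg before the failed rendezvous, which is exactly the $2x_{i+1}$ term in the paper's time expression) and your sign analysis of the quintic at $a=1$ are points the paper states without argument, so your plan is, if anything, slightly more detailed; the paper phrases the limit as $d\to\infty$ rather than your $n\to\infty$, but these are equivalent under the coupling implied by the just-missed family.
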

\begin{proof} 
The main idea in the algorithm is to have both robots cover incremental sequences of distances $x_i$, where $S$ covers the even terms and $R$ covers the odd terms. It is noticeable that the worst-case scenario occurs if $R$ just misses the target at some iteration $i$. The worst-case competitive ratio would remain the same whether the target is missed at the odd term $x_{4n-1}$ or at any other odd term, as the logic for calculating the competitive ratio is the same in both cases.

To explain the significance of the term $x_{4n-1}$, we observe a pattern in the movement. Initially, $S$ moves to the left by distance $x_0$ and returns to meet $R$, which then moves a distance of $x_3$. Afterward, $R$ returns to meet $S$, who then moves a distance of $x_4$, and so on. In this process, $R$ next moves $x_7$, then returns, and the cycle continues. Thus, the total time spent in this alternating movement can be represented by the sum $x_0 + x_3 + x_4 + x_7 + \dots$, where the terms $x_3, x_7, \dots$ are of the form $x_{4n-1}$.

To compute the competitive ratio, assume that the target is missed by $R$ after covering a distance $x_i$, where $x_i$ belongs to the sequence $x_{4n-1}$ at some iteration indexed by $\frac{i+1}{2}$. In this case, the competitive ratio is given by:
\begin{align*}
CR &=  \frac{2\sum_{j=0}^{\frac{i-3}{4}} x_{4j} + 2\sum_{j=0}^{\frac{i-3}{4}} x_{4j+3} + 2x_{i+1} + \frac{d - 2v \sum_{j=0}^{\frac{i-3}{4}} x_{4j} - 2v \sum_{j=0}^{\frac{i-3}{4}} x_{4j+3} - 2v x_{i+1}}{1+v}}{\frac{d}{1+v}} \\
&= 1 + \frac{2 \sum_{j=0}^{\frac{i-3}{4}} x_{4j} + 2 \sum_{j=0}^{\frac{i-3}{4}} x_{4j+3} + 2x_{i+1}}{d}
\end{align*}

If the target is just missed at iteration $i$, then we have the following:
\begin{align*}
   d - 2v \sum_{j=0}^{\frac{i-3}{4}} x_{4j} - 2v \sum_{j=0}^{\frac{i-7}{4}} x_{4j+3}-vx_{i} \geq a^i
\end{align*}
Since
$
\sum_{j=0}^{\frac{i-3}{4}} x_{4j} = \frac{a^{i+1} - 1}{a^4 - 1} \mbox{ and } \sum_{j=0}^{\frac{i-7}{4}} x_{4j+3} = \frac{a^{i} - a^3}{a^4 - 1}
$
we end up having the following:
\begin{align*}
    a^i \leq \frac{d(a^4 - 1)}{a^4 + v a^4 + 2a v + v - 1} + \frac{2v a^3 + 2v}{a^4 + v a^4 + 2a v + v - 1}
\end{align*}
The competitive ratio satisfies
\begin{align*}
    CR &= 1 + \frac{2a^{i+5} + 2a^{i+4} - 2a^3 - 2}{d(a^4 - 1)} 
    \leq 1 + 2a^i \left( \frac{a^5 + a^4}{d(a^4 - 1)} \right) \\
    &\leq 1 + 2 \left( \frac{d(a^4 - 1) + 2v a^3 + 2v}{a^4 + v a^4 + 2a v + v - 1} \right) \left( \frac{a^5 + a^4}{d(a^4 - 1)} \right) \\
    &\leq 1 + \frac{2(a^5 + a^4)}{a^4 + v a^4 + 2a v + v - 1},
\end{align*}
which is Inequality~\eqref{eq:zigzag1}, where the last inequality is derived from the previous one by taking the limit as $d$ goes to infinity. The derivative of $f(a) = \frac{2(a^5 + a^4)}{a^4 + v a^4 + 2a v + v - 1}$ is
\begin{align*}
   f'(a) &= \frac{(1 + v) a^5 + 8v a^2 + (11v - 5) a + 4(v - 1)}{(a^4 + v a^4 + 2a v + v - 1)^2}.
\end{align*}
The optimizer can be found by setting $f'(a) = 0$. This is easily seen to be equivalent to $(1 + v) a^5 + 8v a^2 + (11v - 5) a + 4(v - 1)=0$, which is Equation~\eqref{eq:solution} above. 
This proves the theorem. From Equation~\eqref{eq:solution}, we observe that the condition $a > 1$ is satisfied if and only if $v < \frac{1}{3}$. Comparing this algorithm with Algorithm~\ref{NoDistanceTowardOppositeDirectionAlgorithm}, it is clear that Algorithm~\ref{NoDistanceTowardOppositeDirectionAlgorithm} performs better for any value of $v \leq \frac{1}{3}$.
\end{proof}
 
\subsubsection{Waiting Algorithm.}

The waiting algorithm is the strategy whereby both robots remain at the origin (indefinitely). This algorithm is easy to analyze and a proof can be found in \cite{coleman2023line}.
\begin{theorem}
When the target moves toward the origin, the competitive ratio for the waiting algorithm is $1+\frac{1}{v}$.

\end{theorem}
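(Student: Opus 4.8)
The plan is to evaluate $T_A(I)$ and $T_{opt}(I)$ exactly on an arbitrary instance and then observe that their ratio does not depend on $d$. Fix an instance in the toward model: the target starts at distance $d$ from the origin on one of the two half-lines and moves toward the origin at speed $v>0$, so at time $t$ its distance from the origin is $d-vt$ for $0\le t\le d/v$. The choice of half-line is immaterial to the waiting algorithm, since both robots remain at the origin for all time. Because the target approaches the origin at constant positive speed, it reaches the origin exactly at time $t=d/v$, and at that instant the two waiting robots are co-located with it; hence the waiting algorithm always terminates when $v>0$ and $T_A(I)=d/v$. (For $v=0$ it never terminates, consistent with $1+\tfrac1v=\infty$.)

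For the offline optimum I would argue that no robot can reach the target before time $d/(1+v)$. In the offline setting the robots know $d$ and the half-line carrying the target. Take any robot trajectory $x(\cdot)$ with $x(0)=0$ and speed at most $1$; measuring coordinates along the axis whose positive direction points toward the target's side, being co-located with the target at time $t$ requires $x(t)=d-vt$, and since $x(t)\le t$ this forces $t\ge d/(1+v)$. Thus even the first robot to touch the target cannot do so before time $d/(1+v)$, so $T_{opt}(I)\ge d/(1+v)$; conversely this bound is attained by both robots moving at full speed toward the target, giving $T_{opt}(I)=d/(1+v)$.

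Putting the two together yields $T_A(I)/T_{opt}(I)=\frac{d/v}{d/(1+v)}=\frac{1+v}{v}=1+\frac1v$ for every instance $I$, and taking the supremum over $d$ and over the starting half-line leaves this value unchanged, so $CR=1+\frac1v$. The argument is essentially a one-line computation; the only two points needing a moment of care are (i) that a target committed to moving toward the origin at a constant positive speed necessarily passes through the origin, which is what makes the passive waiting strategy capture at all, and (ii) that the offline optimum cannot beat the head-on interception time $d/(1+v)$, which is exactly the speed-bound inequality $x(t)\le t$ used above. Since the statement coincides with the one established in~\cite{coleman2023line}, nothing beyond this is needed.
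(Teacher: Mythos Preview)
Your proof is correct and is exactly the straightforward computation one expects: the target reaches the waiting robots at time $d/v$, the offline optimum is $d/(1+v)$, and the ratio is $1+\tfrac1v$ independently of $d$. The paper itself does not spell out a proof but simply refers to~\cite{coleman2023line}; your argument is the natural one behind that citation, so there is nothing to add.
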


Figure~\ref{Performance} illustrates and compares the upper bounds of the two algorithms as a function of $v$.

\begin{figure}[h]
\begin{center}
\includegraphics[width=11cm,height=7cm]{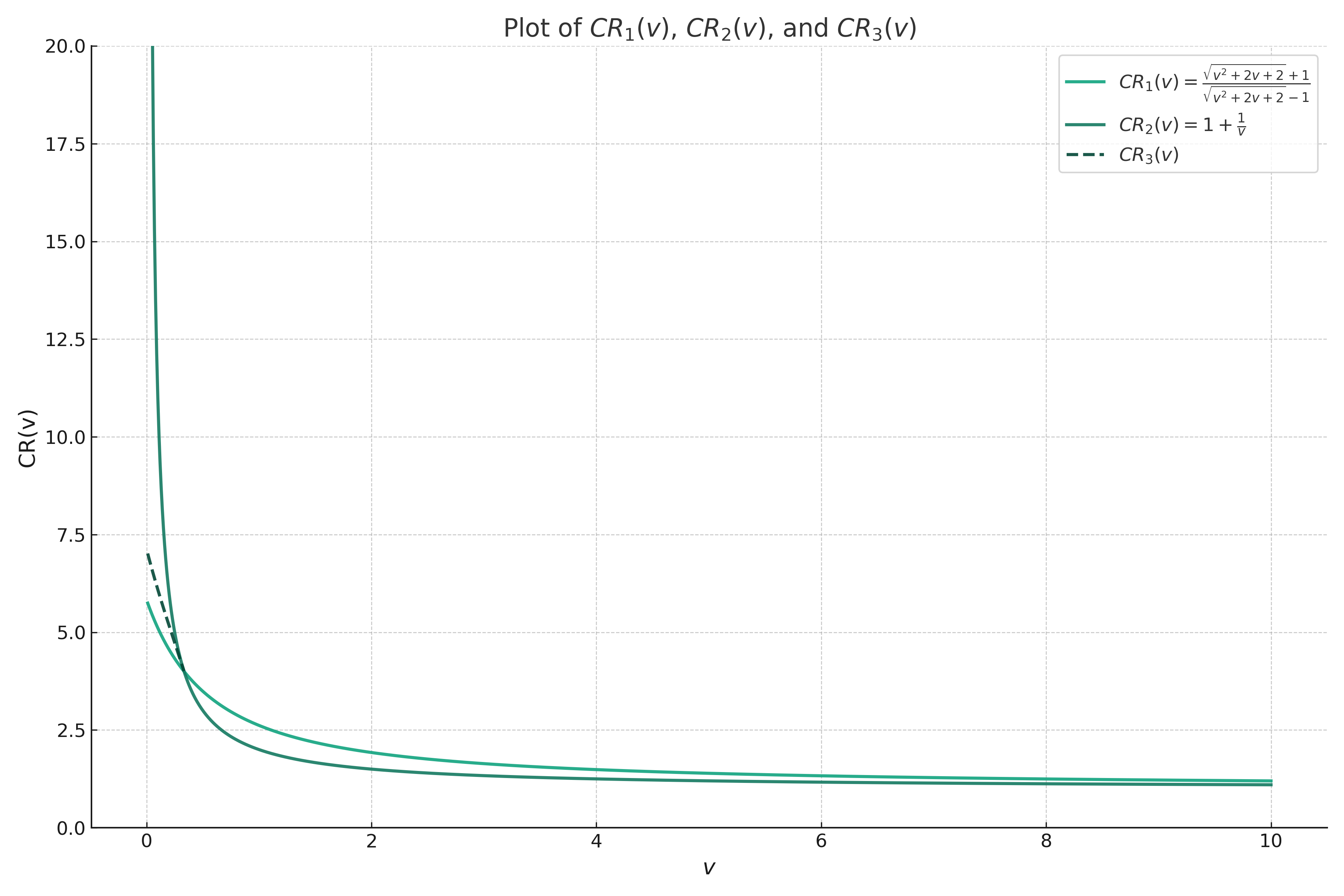}
\end{center}
\caption[Graph showing the performance of the three algorithms, illustrating how the competitive ratio fluctuates with the speed of the moving target]{Algorithm~\ref{NoDistanceTowardOppositeDirectionAlgorithm} achieves the best performance for $0\leq v\leq \frac{1}{3}$, while Algorithm~\ref{ZigZagTillMeetingAlgorithm} outperforms the waiting algorithm within that same interval. For $v\geq \frac{1}{3}$
 , the waiting algorithm becomes the most competitive. Interestingly, all three algorithms share the same competitive ratio at $v=\frac{1}{3}$}
\label{Performance}
\end{figure}
\subsection{Target moving away from the origin}
The target moves away from the origin. We have no knowledge of the initial distance of the target from the origin. We have knowledge only of the speed of the moving target. Assume that $R$ moves in one direction with maximum unit speed and $S$ moves in the other direction with speed $u$. There are two cases to consider. Either $S$ or $R$ finds the target first.  If $S$ finds the target first, it informs $R$ which proceeds to the target, otherwise if $R$ does, it switches its direction to reach $S$, and then both robots move to the target with unit speed.
\begin{algorithm}[H]
\caption{NoDistanceAway S/R} 
\label{NoDistanceAwayS/RAlgorithm}
\begin{algorithmic}[1]
\State {$S$ moves to the left with speed $u$; $R$ moves to the right with speed $1$}
\If {$R$ finds the target first}
\State{it changes direction and catches up to $S$;}
\State{The two robots change direction and move towards the target with unit speed;}
\Else
\If {$S$ finds the target first} 
\State{it informs $R$ that the target has been found and moves with the target;}
\State{$R$ moves with unit speed to the target;}
\EndIf
\EndIf
\end{algorithmic}
\end{algorithm}
We prove the following result.
\begin{theorem}
\label{theorem1}
The competitive ratio of Algorithm~\ref{NoDistanceAwayS/RAlgorithm} is 
\begin{equation}
\label{eq:awalg}
\frac{\sqrt{v^2-2v+2}+1}{\sqrt{v^2-2v+2}-1}.
\end{equation}
\end{theorem}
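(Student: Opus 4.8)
The plan is to mirror the structure of the proof of Theorem~\ref{NoDistanceTowardOppositeDirectionTheorem}, but now with the target fleeing the origin at speed $v<1$. As before, set $R$ to move right with unit speed and $S$ to move left with speed $u$, and split into the two cases according to which robot first reaches the (single, unknown) half-line containing the target. Fix the convention that right is the positive $x$-axis. Because the target moves away, the effective closing speeds change sign relative to the toward model: in the case where the robot and the target are on the same side the catch-up speed is $1-v$ rather than $1+v$, while the offline optimum is now $\frac{d}{1-v}$ (the offline searcher runs straight at the target).

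First I would handle the case \textbf{$S$ finds the target first}. Here $S$ is on the same side as the target, so it meets the target after time $\frac{d}{u+v}$ if the target fled toward $S$'s side; but since the target flees \emph{away} from the origin on whichever half-line it started, the relevant subcase for the worst case is that $S$ is on the target's side, giving meeting time $\frac{d}{1-u}$ if... — more carefully, one should write the meeting-time expression as a function of the geometry and then note that $S$ relays the position to $R$ wirelessly, so $R$ simply heads straight for the (now known, moving) target; the capture time is $S$'s detection time plus the time for $R$ to traverse the gap between them at closing speed $1-v$. This yields a competitive-ratio expression of the form $1 + \frac{g(u,v)}{u}$ or similar, decreasing in $u$, exactly paralleling Equation~\eqref{eq0}.

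Next, the case \textbf{$R$ finds the target first}: $R$ reverses to catch $S$ (who is moving away at speed $u$, so closing speed $1-u$), then both chase the target at closing speed $1-v$. Accumulating the three legs — $R$'s detection time $\frac{d}{1-v}$ (appropriately, the time to reach the target on $R$'s side, accounting for the target's flight), the catch-up time for $S$, and the final joint pursuit — gives a competitive ratio increasing in $u$, the analogue of Equation~\eqref{eq0a}. Then, as in Theorem~\ref{NoDistanceTowardOppositeDirectionTheorem}, I would invoke the monotonicity of the two branches (one decreasing, one increasing in $u$) to conclude that the optimal $u$ equates the two expressions; solving the resulting quadratic should give $u = \sqrt{v^2-2v+2} - (1-v)$ — note the sign flip on the linear term compared to the toward case, which is the source of the $v^2-2v+2$ under the radical in Equation~\eqref{eq:awalg}. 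Substituting back yields $\frac{\sqrt{v^2-2v+2}+1}{\sqrt{v^2-2v+2}-1}$, and checking $v=0$ recovers $\frac{\sqrt2+1}{\sqrt2-1} = 3+2\sqrt2$.

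I expect the main obstacle to be bookkeeping the geometry correctly in the ``away'' setting: one must be careful that the worst case over which half-line the target started on, and over the exact moment of a near-miss, is the one driving the ratio, and that the distances between $S$ and $R$ at the relevant times are computed with the target's displacement $vt$ entering with the correct sign in each leg (it helps $R$ close on a target fleeing toward $R$, but the offline optimum also benefits). Once the correct closing speeds $1-u$ and $1-v$ are pinned down and the offline optimum $\frac{d}{1-v}$ is fixed, the algebra is routine and the quadratic $2u^2 + 4u(1-v)\cdot(\text{sign}) - 2 = 0$ analogue solves cleanly; the monotonicity argument for optimality of $u$ is identical to the one already given and can be cited rather than repeated.
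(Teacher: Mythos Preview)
Your overall strategy is exactly the paper's: split into the two cases ``$S$ finds first'' / ``$R$ finds first'', compute the competitive ratio in each as a function of $u$, observe that one branch is decreasing and the other increasing in $u$, equate them, solve the resulting quadratic $2u^{2}+4u-4uv-2=0$ to get $u=(v-1)+\sqrt{v^{2}-2v+2}$, and substitute. The $R$-finds-first leg you describe (detection time $\tfrac{d}{1-v}$, catch-up to $S$ at closing speed $1-u$, joint pursuit at closing speed $1-v$) matches the paper line for line and yields $CR=\tfrac{3-v+u+uv}{(1-v)(1-u)}$.

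The one place your write-up is actually wrong, not just sketchy, is the $S$-finds-first meeting time. If $S$ moves at speed $u$ on the target's half-line and the target flees at speed $v$, the closing speed is $u-v$, so the detection time is $\tfrac{d}{u-v}$ (not $\tfrac{d}{u+v}$, which is the toward-model value, and not $\tfrac{d}{1-u}$, which is unrelated). At that instant $R$ is at distance $\tfrac{d}{u-v}+\tfrac{du}{u-v}$ from $S$, and $R$ then closes on the target at speed $1-v$; this gives $CR=\tfrac{2-v+u}{u-v}=1+\tfrac{2}{u-v}$, which is the decreasing-in-$u$ branch you anticipated. Your parenthetical ``it helps $R$ close on a target fleeing toward $R$'' is also reversed: on $R$'s side the target flees \emph{away} from $R$, hence the $1-v$ everywhere. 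Fix these signs and the rest of your outline goes through verbatim; note also that the optimal $u$ satisfies $u>v$ for all $0\le v<1$, so $\tfrac{d}{u-v}$ is finite and the $S$-branch is well-posed.
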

\begin{proof} (Theorem~\ref{theorem1})
$S$ moves with speed $u$ and $R$ with unit speed then we have two cases to consider depending on which of $S$ or $R$ finds the target first.
 \begin{itemize}
     \item {\bf $S$ finds the target first.} 
     The competitive ratio is as follows:
     \begin{align}
         CR&=\label{eq:finds0} \frac{\frac{d}{u-v}}{\frac{d}{1-v}}+\frac{\frac{\frac{d}{u-v}+\frac{du}{u-v}}{1-v}}{\frac{d}{1-v}}
         =\frac{\frac{d}{u-v}+\frac{du+d}{(1-v)(u-v)}}{\frac{d}{1-v}}
         =\frac{2-v+u}{u-v}.
     \end{align}
     \item {\bf $R$ finds the target first.}
     To reach the target, it needs time: 
$\frac{d}{1-v}$.
At this time, $S$ would be away from $R$ by distance $\frac{d}{1-v}+\frac{du}{1-v}$, thus for $R$ to capture $S$, it needs time:
\begin{equation}
        \label{eq5}
       \frac{\frac{d}{1-v}+\frac{du}{1-v}}{1-u}=\frac{d+du}{(1-v)(1-u)} .
\end{equation}
Thus both robots would be away from the target by $\frac{d+du}{(1-v)(1-u)}+\frac{dv+dvu}{(1-v)(1-u)}$. It follows that for both robots to catch the target, they need time:
\begin{equation}
        \label{eq6}
       \frac{d+du+dv+dvu}{(1-v)^2(1-u)}.
\end{equation}
Combining $\frac{d}{1-v}$ with equations \eqref{eq5},~\eqref{eq6}, the competitive ratio becomes:
\begin{align}
    \frac{\frac{d}{1-v}+\frac{d+du}{(1-v)(1-u)}+\frac{d+du+dv+dvu}{(1-v)^2(1-u)}}{\frac{d}{1-v}}
   &=\label{eq:finds1} \frac{3-v+u+vu}{(1-v)(1-u)} .
\end{align}
 \end{itemize}
 It is required to find the optimal value $u$. Setting the competitive ratios in Equations~\eqref{eq:finds0}~and~\eqref{eq:finds1} to be equal, we get
 $\frac{3+u+uv-v}{(1-v)(1-u)}=\frac{2-v+u}{u-v}.$
 We conclude that $2u^{2}+4u-4uv-2=0$,  thus $u=(v-1)+\sqrt{v^2-2v+2}$. Substituting $u$ above yields Equation~\eqref{eq:awalg}.
This proves the theorem.~\qed 
\end{proof}

\section{The NoSpeed Model}
\label{sec: The NoSpeed Model}
In this section, we consider the NoSpeed model ($d$ is known but $v$ is not known to the robots) and distinguish the cases where the target is moving away from or toward the origin.

\subsection{Target moving toward the origin}

Consider the following algorithm. 
 \begin{algorithm}[H]
\caption{NoSpeedToward S/R} 
\label{NoSpeedTowardS/RAlgorithm}
Input: Target Initial distance d
\begin{algorithmic}[1]
\State {The two robots select the same direction and move together with unit speed, distance $d$}
\If {Target is not found}
\State{Both robots change direction and move together until they encounter the target, at which point they stop.}
\EndIf
\end{algorithmic}
\end{algorithm}

\begin{theorem}
\label{NSTowardS/RTheorem}
The competitive ratio of Algorithm~\ref{NoSpeedTowardS/RAlgorithm} is upper bounded by $3$ and this is optimal.
\end{theorem}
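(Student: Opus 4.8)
\textbf{Proof proposal for Theorem~\ref{NSTowardS/RTheorem}.}

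The plan is to split the argument into an upper bound on the competitive ratio of Algorithm~\ref{NoSpeedTowardS/RAlgorithm} and a matching lower bound valid for every algorithm in the NoSpeed toward model. For the upper bound, I would first observe that the worst case for the algorithm is when the target starts on the side the robots \emph{do not} initially choose; otherwise the robots meet the target no later than time $d/(1+v)$, which is already optimal. So assume the target starts at distance $d$ on the far side. The robots travel distance $d$ (time $d$) in the wrong direction, reverse, and then chase the target, which is approaching the origin from the other side. At the turnaround time $d$ the robots are at distance $d$ from the origin on the wrong side, and the target is at distance $d-vd=d(1-v)$ from the origin on the correct side; hence the gap between robots and target is $d+d(1-v)$ and it closes at combined speed $1+v$, giving an additional time $\frac{d(2-v)}{1+v}$. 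The total capture time is therefore $d+\frac{d(2-v)}{1+v}=\frac{d(3+0\cdot v)}{1+v}$; more carefully, $d+\frac{d(2-v)}{1+v}=\frac{d(1+v)+d(2-v)}{1+v}=\frac{3d}{1+v}$. Meanwhile the offline optimum is to walk straight at the target on the correct side, meeting it at time $T_{opt}=\frac{d}{1+v}$. Dividing gives $CR\le 3$ exactly, independent of $v$, so one also needs to check the case $v=0$ (the robots must return all the way, and $T_{opt}=d$, giving ratio exactly $3$ as well), and the edge case where the target might have already passed the origin and be moving on — since $v$ can exceed $1$, the target could in principle reach the origin before the robots turn around; I would note that in that regime the robots still catch it (the target decelerates to a stop at the origin or passes through; the paper's convention presumably has it stop at or pass the origin), and the ratio only improves, but this is the one place the informal model description needs to be pinned down.

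For the lower bound, the plan is an adversary argument: the target's speed $v$ is unknown, so whatever the two robots do, an adversary chooses $d$'s side and $v$ adversarially. The cleanest version: consider $v\to\infty$ (or just $v$ very large). Then $T_{opt}=\frac{d}{1+v}\to 0$, so the target essentially teleports to the origin instantly. Any algorithm must have \emph{both} robots at the origin to capture (capture requires co-location of both robots with the target), so whichever robot is farther from the origin at the moment the target arrives determines the capture time; since the robots cannot know which side to commit to and cannot know $v$, an adversary forces at least one robot to be at distance $\ge$ some positive amount — but as $v\to\infty$ this gives ratio $\to\infty$, which is too strong and shows I am over-reaching. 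The correct lower-bound instance must keep $v$ moderate. I would instead argue: at the time $T_{opt}=\frac{d}{1+v}$ when an optimal searcher would capture, the online robots have each traveled at most $\frac{d}{1+v}$, so they lie in $[-\frac{d}{1+v},\frac{d}{1+v}]$; the adversary places the target on whichever side forces the robots to have made the wrong guess. A standard Bellman-style potential/chase accounting then shows that the time to bring both robots to the (moving) target is at least $\frac{3d}{1+v}$ in the worst case, because a robot that has gone distance $x$ the wrong way needs time $\ge x+\frac{x + (\text{target position})}{1+v}$ to rejoin, and optimizing the adversary's choice of which robot and the fact that the Sender/Receiver asymmetry does not help when the Sender is the one who must also physically reach the target yields the bound $3$.

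The key steps in order: (1) reduce the upper bound to the far-side case and compute $\frac{3d}{1+v}$ versus $\frac{d}{1+v}$; (2) handle $v=0$ and large-$v$ edge cases within the model's conventions; (3) for the lower bound, set up the adversary over $d$'s side and show any algorithm is forced into a wrong-side commitment; (4) do the chase-time accounting to extract the factor $3$, being careful that the S/R model's wireless channel gives no advantage here since information about the target's side is only generated once a robot physically finds it, by which point the geometry is already fixed. The main obstacle I expect is step (3)--(4): proving the lower bound genuinely holds against \emph{all} S/R algorithms, including adaptive ones where the Sender might explore one side and wirelessly direct the Receiver — I would need to argue that any such split exploration is dominated, in the worst case over the adversary's side choice, by the symmetric ``go together'' strategy, most likely via a careful case analysis on how far each robot has committed by time $T_{opt}$ combined with the observation that wireless communication cannot retroactively save travel already spent in the wrong direction.
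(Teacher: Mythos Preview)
Your upper bound is correct and matches the paper's computation exactly: the worst case is the far-side start, and the calculation $d+\frac{d(2-v)}{1+v}=\frac{3d}{1+v}$ against $T_{opt}=\frac{d}{1+v}$ gives ratio $3$ for every $v$. Your hand-wringing about $v>1$ is unnecessary here, since the identity $\frac{t(1+v)+2d-vt}{1+v}=\frac{t+2d}{1+v}$ (with $t=d$) shows the ratio is $3$ regardless, and anyway the adversary in the lower bound will not pick large $v$.

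The lower bound is where your proposal has a genuine gap. You try positions at time $T_{opt}$ and a vague ``Bellman-style accounting,'' and you correctly flag steps (3)--(4) as the obstacle. The missing idea is much simpler than what you are reaching for, and it exploits the one thing you never use: in the NoSpeed model the adversary also controls $v$. The paper's argument is: since $v$ could be $0$ (or arbitrarily small), any correct algorithm must eventually send \emph{some} robot to one of the two points $\pm d$; otherwise a static target sitting at $\pm d$ is never captured. Let $t\ge d$ be the first time a robot reaches, say, $+d$. The adversary places the target at $-d$ (with any speed $v$). That robot is now at distance $2d-vt$ from the target and still has to be co-located with it, so capture cannot occur before time $t+\frac{2d-vt}{1+v}=\frac{t+2d}{1+v}\ge\frac{3d}{1+v}$, giving ratio at least $3$. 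The S/R channel is irrelevant because the bound is on one specific robot's travel, not on information.

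So the fix is: drop the $v\to\infty$ and the $T_{opt}$-snapshot attempts, and instead use small $v$ to force a physical visit to $\pm d$; the rest is the same chase arithmetic you already did for the upper bound.
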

\begin{proof}
     In the worst case scenario, if the target’s speed is very small and one of the robots moves in the opposite direction to where the target is located, the robot will end up at a distance of approximately
$2d$ from the target. At that point, the robot will be separated from the target by a distance $2d-dv$.
  Thus, the competitive ratio is
$
    CR =\frac{d+\frac{2d-dv}{1+v}}{\frac{d}{1+v}}=3 .
$
 This proves the upper bound. 
 To prove the lower bound, we argue as follows. One of the two points $-d, d$ must be visited by either $S$ or $R$, otherwise if the speed of the target is too small, then none of the robots would be able to capture the target. There are two cases to consider, either $S$ or $R$ reaches one of the two points $-d, d$. Let us consider the first case, Assuming that $S$ reaches point $d$, the adversary can initially place the target at point $-d$. Thus for the robots to capture the target, considering that the speed of the target is $v$, the competitive ratio is at least:
$
    \frac{d+\frac{2d-dv}{1+v}}{\frac{d}{1+v}}=3.
$    
For the second case, assuming that $R$ reaches point $d$, then the adversary can initially place the target at point $-d$ and we end up having a similar competitive ratio $3$. 
This proves the theorem.
~\qed
\end{proof}
\subsection{Target moving away from the origin}

 The target moves away from the origin in any direction. Let us consider a monotone increasing sequence $\{f_{i}:i\geq 0\}$ of non-negative integers. The idea is to guess the speed of the target. 
\begin{algorithm}[H]
\small
\caption{NoSpeedAway S/R}
\label{NoSpeedAwayS/RAlgorithm}
\textbf{Input:} Target's initial distance $d$ \\
\textbf{Given:} Increasing integer sequence $f_{i}$ with $f_0 = 1$, $f_i < f_{i+1}$; initial times $t_1 = t_2 = 0$
\begin{algorithmic}[1]
\For{$i \gets 0$ to $\infty$ \textbf{until} the target is found}

\Statex \textbf{Sender's (S) moves:}
  \State Set $v_{2i} = 1 - 2^{-f_{2i}}$, $x_{2i} = -\frac{d + t_1 v_{2i}}{1 - v_{2i}}$
  \State $S$ moves left by $x_{2i}$ and returns to meet $R$
  \If{$S$ does not meet $R$}
    \State $S$ continues to pursue the target
  \Else
    \State $S$ reverses and sets next guess $v_{2i+2}$
  \EndIf
  \If{$S$ finds the target}
    \State $S$ stays with the target
  \EndIf

\vspace{0.5em}

\Statex \textbf{Receiver's (R) moves:}
  \State Set $v_{2i+1} = 1 - 2^{-f_{2i+1}}$, $x_{2i+1} = \frac{d + t_2 v_{2i+1}}{1 - v_{2i+1}}$
  \State $R$ moves right by $x_{2i+1}$ and returns to meet $S$
  \If{$R$ misses $S$}
    \State $R$ continues to pursue the target
  \Else
    \State $R$ continues with $S$ using $v_{2i+1}$
    \If{target not found}
      \State $R$ reverses and updates to $v_{2i+3}$
    \EndIf
  \EndIf
  \If{$R$ finds the target}
    \State $R$ stays with the target
  \EndIf

\vspace{0.5em}

\Statex \textbf{Time Updates:}
  \State $t_1 \gets t_1 + 2|x_{2i}| + 2 \times$(meeting time with $R$)
  \State $t_2 \gets t_2 + 2|x_{2i+1}| + 2 \times$(pursuit time with $S$ and return)

\EndFor
\end{algorithmic}
\end{algorithm}

The algorithm works through iterations. We will use the guess $v_{i}=1-2^{-f_{i}}$. At the beginning, both robots $S$ and $R$ are situated at the origin. During each iteration $i$, robot $S$ moves in one direction with a guess of $v_{2i}$, and $R$ moves in the other direction with a guess of $v_{2i+1}$. If $S$ finds the target first, it communicates with $R$ to proceed to the target. Otherwise if $R$ finds the target first, it moves with the target and waits for $S$ to proceed to the target. If the target is not found by neither $S$ nor $R$, then each robot returns back until they meet. At the meeting point, $R$ would go back with $S$ to confirm whether the target is caught with a guess of $v_{2i+1}$ on the side where $S$ was moving. If the target is not found by $S$ and $R$, then $R$ reverses its direction and proceeds with a guess of $v_{2i+3}$, while $S$ continues in the same direction with a guess of $v_{2i+2}$. The competitive ratio is as follows.
\begin{theorem}
\label{thm:NoSpeedAwayS/RAlgorithm}
The competitive ratio of Algorithm~\ref{NoSpeedAwayS/RAlgorithm} is 
$
1 + O(u^{\frac{10}{3}}  \log u),
$
where $u = \frac{1}{1-v}$.
\end{theorem}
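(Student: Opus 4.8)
The plan is to analyze Algorithm~\ref{NoSpeedAwayS/RAlgorithm} by first pinning down the worst-case input: the adversary fixes the true speed $v$ (hence $u = \frac{1}{1-v}$) and arranges for the target to be "just missed" at some iteration, i.e., the true speed $v$ lies just above the guess $v_{2i+1}$ (or $v_{2i}$) used by the searcher that goes to the correct side. The starting distance $d$ is known, so the only real uncertainty is which guess index $k$ first exceeds the true $v$; by the choice $v_i = 1 - 2^{-f_i}$, the critical index satisfies $2^{f_k} \approx u$, i.e., $f_k = \Theta(\log u)$. First I would write down, for a generic iteration $i$, the leg length $x_i = \frac{d + t v_i}{1 - v_i}$ needed to catch a target receding at the guessed speed, where $t$ is the accumulated time so far; the factor $\frac{1}{1-v_i}$ is essentially $2^{f_i}$, so each leg is geometric-like in $2^{f_i}$ but with the cumulative-time term $t v_i$ feeding back.

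Next I would set up the recurrence for the accumulated time $t$ (the variables $t_1,t_2$ in the pseudocode) across iterations, since each round's leg length depends on the total elapsed time, which in turn is dominated by the previous (largest) leg. The key estimate is that the time is dominated by its last term, so $T_A(I) = O(x_{k})$ where $k$ is the critical index, and $x_k = O\big( (d + T_{k-1}) \cdot 2^{f_k}\big)$ with $T_{k-1}$ the time through iteration $k-1$. Unrolling this with $2^{f_k} = \Theta(u)$ gives $T_A(I) = O\big( d \cdot \prod_{j \le k} 2^{f_j}\big)$ up to lower-order factors — so the product $\prod_{j\le k} 2^{f_j} = 2^{\sum_{j \le k} f_j}$ is what must be controlled. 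The optimal offline time is $T_{opt}(I) = \frac{d}{1-v} = du$, so the competitive ratio is $1 + O\big( 2^{\sum_{j\le k} f_j} / d \cdot (\text{correction})\big)$; making this bounded and small forces a careful choice of the sequence $f_i$.

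Then I would choose the sequence $f_i$ to balance two competing costs: the index $k$ at which $2^{f_k} \ge u$ should be reached quickly (so $\sum_{j<k} f_j$, the "wasted" work on underestimates, stays small), but the jumps $f_{i+1} - f_i$ can't be too large or the overshoot $2^{f_k}$ badly exceeds $u$. The standard trick here — as in doubling/hyperbolic-search schedules — is a super-geometric schedule such as $f_i \approx (1+\epsilon)^i$ or $f_i \approx i \log i$, tuned so that $\sum_{j \le k} f_j = O(f_k \cdot \text{polylog})$ with $f_k = \Theta(\log u)$. Plugging $\sum f_j = O(\log u \cdot \log\log u)$ type bounds and accounting for the $\frac43$-type exponent loss coming from the two-robot rendez­vous overhead (the return-to-meet and the confirmation pass on $S$'s side each multiply the relevant leg by a constant, and these compound across the $O(\log\log u)$ relevant iterations) should yield the claimed $1 + O(u^{10/3} \log u)$.

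The main obstacle I expect is the bookkeeping of the cumulative-time feedback: because each leg length $x_i$ contains the term $t v_i$ and $t$ is itself a sum of all previous (doubled) legs plus meeting/pursuit times, the recurrence is not a clean geometric sum, and one must show the "last term dominates" claim rigorously — i.e., that the tail $\sum_{j<k} x_j \cdot (\text{overhead})$ is a lower-order fraction of $x_k$. This requires the schedule $f_i$ to be increasing fast enough that $2^{f_j}$ for $j<k$ is a small fraction of $2^{f_k}$, yet the constant blow-up per iteration (from the F2F meeting detours) does not destroy this when summed over $\Theta(\log\log u)$ iterations. Carefully matching the exponent $\frac{10}{3}$ — presumably $3$ from a cubing-type compounding of three overhead passes per round plus $\frac13$ from the schedule's growth rate, analogous to the $4 - (\log_2\log_2 u)^{-2}$ single-robot exponent — is the delicate quantitative step; everything else is routine estimation of geometric-type sums.
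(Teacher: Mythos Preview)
Your overall plan matches the paper's: express each leg as $x_i = 2^{f_i}(d + v_i \cdot \text{time so far})$, derive a recurrence relating $x_{i+2}$ to $x_i$, unroll it, and then choose the schedule $\{f_i\}$ to bound the resulting product. Where your sketch goes astray is precisely at the step you flag as ``delicate'': the exponent $\tfrac{10}{3}$ has nothing to do with ``three overhead passes per round'' or any cubing. The paper takes the concrete schedule $f_i = 2^i$ (so your $(1+\epsilon)^i$ with $\epsilon = 1$; the alternative $f_i \approx i\log i$ you mention would give a super-polynomial ratio and must be discarded). The key structural fact you are missing is that, because $S$ and $R$ alternate sides, only the \emph{odd}-indexed $f$'s enter the dominant sum: with $f_i = 2^i$ one gets $\sum_{j=0}^{i/2} f_{2j+1} = 2\sum_{j=0}^{i/2} 4^j = \tfrac{8}{3}\cdot 2^i - \tfrac{2}{3}$. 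After dividing by $2^{f_i} = 2^{2^i}$ in the competitive-ratio expression one is left with $2^{\frac{5}{3}\cdot 2^i}$; since capture happens at the first $i$ with $2^{f_{i-1}} \ge u$, we have $2^i \le 2\log_2 u$, whence $2^{\frac{5}{3}\cdot 2^i} \le u^{10/3}$. The residual $\log u$ factor comes from a $6^{i/2}$ term produced by the constant in the recurrence $x_{i+2} \le 6 \cdot 2^{f_{i+3}} x_i$, with $i = O(\log\log u)$.

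So the framework is right, but your proposal does not yet contain the two specific ingredients that make the numbers come out: the choice $f_i = 2^i$, and the observation that the $S$/$R$ alternation halves the index set so the exponent sum is a geometric series in $4$ rather than in $2$ (producing the $\tfrac{8}{3}$, hence $\tfrac{5}{3}$, hence $\tfrac{10}{3}$). Without these, a generic ``$\sum f_j = O(\log u \cdot \log\log u)$'' estimate would at best give $u^{O(1)}$ with an unspecified constant, not the claimed $\tfrac{10}{3}$.
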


\begin{proof} 
   Assume that $x_{i}$ represents the time required by $R$ to move in one direction with a guess of the speed $v_{i}$ and then return to the origin. Furthermore, it includes the time needed to move in the opposite direction after meeting $S$ and then return to the origin again, all while using the same speed guess $v_{i}$. Thus, at iteration $\frac{i+1}{2}$, $x_{i}$ can be expressed as follows:
   \begin{align*}
       x_{i}&=2\cdot 2^{f_{i}}\cdot (d+v_{i}\sum_{j=0}^{\frac{i-3}{2}}x_{2j+1})+2\cdot 2^{f_{i}}(2.2^{f_{i}}\cdot (d+v_{i}\sum_{j=0}^{\frac{i-3}{2}}x_{2j+1}))+ 2\cdot 2^{f_{i}}d\\
       &=(2 \cdot 2^{f_{i}}+1)(2.2^{f_{i}}(d+v_{i}\sum_{j=0}^{\frac{i-3}{2}}x_{2j+1})))+ 2\cdot 2^{f_{i}}d\\
       &=2\cdot 2^{f_{i}}.d+2\cdot 2^{f_{i}}\cdot v_{i}\sum_{j=0}^{\frac{i-3}{2}}x_{2j+1}+4\cdot 2^{f_{i+1}}\cdot d+4\cdot 2^{f_{i+1}}v_{i}\sum_{j=0}^{\frac{i-3}{2}}x_{2j+1}+ 2\cdot 2^{f_{i}}d .
   \end{align*}
   Since $v_{i}=1-2^{-f_{i}}$, we get for $x_i$:
   \begin{align*}
       &x_{i}=4\cdot 2^{f_{i}}d+4\cdot 2^{f_{i+1}}d-2\cdot 2^{f_{i}}\sum_{j=0}^{\frac{i-3}{2}}x_{2j+1}-2\sum_{j=0}^{\frac{i-3}{2}}x_{2j+1}+4\cdot 2^{f_{i+1}}\sum_{j=0}^{\frac{i-3}{2}}x_{2j+1} .
   \end{align*}  
   We conclude the following:
   \begin{align*}
       &\sum_{j=0}^{\frac{i-3}{2}}x_{2j+1}=\frac{x_{i}-4\cdot d\cdot 2^{f_{i}}-4\cdot d\cdot 2^{f_{i+1}}}{4\cdot 2^{f_{i+1}}-2\cdot 2^{f_{i}}-2}\\
       &\implies \sum_{j=0}^{\frac{i-1}{2}}x_{2j+1}=\frac{x_{i+2}-4\cdot d\cdot 2^{f_{i+2}}-4\cdot d\cdot 2^{f_{i+3}}}{4\cdot 2^{f_{i+3}}-2\cdot 2^{f_{i+2}}-2}\\
        &\implies x_{i}=\frac{x_{i+2}-4\cdot d\cdot 2^{f_{i+2}}-4\cdot d\cdot 2^{f_{i+3}}}{4\cdot 2^{f_{i+3}}-2\cdot 2^{f_{i+2}}-2}-\frac{x_{i}-4\cdot d\cdot 2^{f_{i}}-4\cdot d\cdot 2^{f_{i+1}}}{4\cdot 2^{f_{i+1}}-2\cdot 2^{f_{i}}-2}.
   \end{align*}
   We deduce the following:
   \begin{align*}
       \tfrac{x_{i+2}}{4\cdot 2^{f_{i+3}}-2\cdot 2^{f_{i+1}}-2}&\leq (1+\tfrac{1}{4\cdot 2^{f_{i+1}}-2\cdot 2^{f_{i}}-2})x_{i}+(\tfrac{4\cdot 2^{f_{i+2}}d+4\cdot 2^{f_{i+3}}d}{4\cdot 2^{f_{i+3}}-2\cdot 2^{f_{i+2}}-2}-\tfrac{4\cdot 2^{f_{i}}d+4\cdot 2^{f_{i+1}}d}{4\cdot 2^{f_{i+1}}-2\cdot 2^{f_{i}}-2})\\
       &\leq (1+\tfrac{1}{2\cdot 2^{f_{i+1}}})x_{i} \mbox{ (since the other term is less than 0)}\\
       &\implies x_{i+2}\leq 4\cdot 2^{f_{i+3}}(1+\tfrac{1}{2\cdot 2^{f_{i+1}}})x_{i}\\
       &\implies x_{i+2}\leq 4\cdot 2^{f_{i+3}}x_{i}+2\cdot 2^{f_{i+3}-f_{i+1}}x_{i}\leq 6\cdot 2^{f_{i+3}}x_{i}\\
        &\implies x_{i+2}\leq 2^{\sum_{j=0}^{\frac{i+2}{2}}f_{2j+1}}\cdot 6^{\frac{i+1}{2}}\cdot 4d .
   \end{align*}
   The capture time $T$ would be as follows:
   \begin{align*}
       &\sum_{j=0}^{\tfrac{i-3}{2}}x_{2j+1}+2\cdot (2^{f_{i}}(d+v_{i}\cdot \sum_{j=0}^{\tfrac{i-3}{2}}x_{2j+1}))+\tfrac{d}{1-v}+\tfrac{\sum_{j=0}^{\tfrac{i-3}{2}}x_{2j+1}v}{1-v}+\tfrac{2\cdot2^{f_{i}}\cdot dv}{1-v}+\tfrac{2\cdot 2^{f_{i}}\cdot v_{i}\cdot \sum_{j=0}^{\tfrac{i-3}{2}}x_{2j+1}v}{1-v}\\
       \end{align*}
       Since $v_{i}=1-2^{-f_{i}}$, then we get:
       \begin{align*}
       T&\leq \frac{2\cdot 2^{f_{i}}d+2\cdot 2^{f_{i}}\sum_{j=0}^{\frac{i-3}{2}}x_{2j+1}-2\sum_{j=0}^{\frac{i-3}{2}}x_{2j+1}+d+\sum_{j=0}^{\frac{i-3}{2}}x_{2j+1}}{1-v}\\
       &\leq \frac{2\cdot 2^{f_{i}}\cdot\sum_{j=0}^{\frac{i-3}{2}}x_{2j+1}-\sum_{j=0}^{\frac{i-3}{2}}x_{2j+1}+d+2\cdot 2^{f_{i}}d}{1-v} .
   \end{align*}
   We conclude that the competitive ratio would be as follows:
   \begin{align*}
       CR&=2\cdot 2^{f_{i}}+1+\frac{2\cdot 2^{f_{i}}\cdot \sum_{j=0}^{\frac{i-3}{2}}x_{2j+1}}{d}-\frac{ \sum_{j=0}^{\frac{i-3}{2}}x_{2j+1}}{d}.
   \end{align*}
  Considering that the target is found in iteration $i$, we have the condition $2^{f_{i-1}} \leq \frac{1}{1 - v}$. Assuming that $f_i = 2^i$, this leads to the following inequality:
   \begin{equation}
        \label{equation9}
        \sum_{j=0}^{\frac{i}{2}}f_{2j+1}=\sum_{j=0}^{\frac{i}{2}}4^j\cdot 2=2(\frac{4^{\frac{i}{2}+1}-1}{3})=8\cdot \frac{4^{\frac{i}{2}}}{3}-\frac{2}{3}=\frac{8\cdot 2^{i}}{3}-\frac{2}{3}
   \end{equation}
    In addition to that we have:  
\begin{align}
2^{f_{i}}\sum_{j=0}^{\frac{i-3}{2}}x_{2j+1}&\leq \frac{2^{f_{i}}\cdot x_{i}}{2\cdot 2^{f_{i+1}}}=\frac{x_{i}}{2\cdot 2^{f_{i}}}\leq \frac{2^{\sum_{j=0}^{\frac{i}{2}}f_{2j+1}}\cdot 6^{\frac{i-1}{2}}\cdot 4d}{2\cdot 2^{f_{i}}} \notag \\
&\leq 2^{\tfrac{8}{3}\cdot 2^{i}-2^{i}}\cdot 6^{\tfrac{i-1}{2}}\cdot d\cdot 2^{\tfrac{1}{3}}\leq 2^{\tfrac{5}{3}\cdot 2^{i}}6^{\tfrac{i-1}{2}}\cdot d\cdot 2^{\tfrac{1}{3}}\notag \\
&\leq \left(\tfrac{1}{1-v}\right)^{\tfrac{10}{3}}\cdot d \cdot 6^{\tfrac{\log(\log(\tfrac{1}{1-v}))}{2}}\cdot 2^{\tfrac{1}{3}}\label{equation10}
\end{align}

We also have the following:
   \begin{align}
\sum_{j=0}^{\frac{i-3}{2}}x_{2j+1}&=\frac{x_{i}}{2\cdot 2^{f_{i+1}}}\leq \frac{2^{\sum_{j=0}^{\frac{i}{2}}f_{2j+1}}\cdot 6^{\frac{i-1}{2}}\cdot 4d}{2\cdot 2^{2^{i+1}}}\leq \frac{2^{\frac{8}{3}\cdot 2^{i}}\cdot 6^{\frac{i-1}{2}}\cdot d\cdot 2^{\frac{1}{3}}}{2^{2^{i+1}}} \notag \\
&\leq 2^{\frac{2}{3}\cdot 2^{i}}\cdot 6^{\frac{i-1}{2}}\cdot d\cdot 2^{\frac{1}{3}}\leq \left(\frac{1}{1-v}\right)^{\frac{4}{3}}\cdot d\cdot 2^{\frac{1}{3}}\cdot 6^{\frac{\log(\log(\frac{1}{1-v}))}{2}} \label{equation11}
\end{align}
Combining Equations~\eqref{equation9},~\eqref{equation10},~\eqref{equation11}, with the fact that $2^{f_{i}}\leq (\frac{1}{1-v})^2$, the competitive ratio satisfies
\begin{align*}
    CR&\leq 1+\frac{2}{(1-v)^2}-\frac{1}{(1-v)^{\frac43}}\cdot 6^{\frac{\log\log\frac{1}{1-v}}{2}} \cdot 2^{\frac{1}{3}}+ \frac{1}{(1-v)^{\frac{10}{3} }} 
    \cdot 6^{\frac{\log\log\frac{1}{1-v}}{2}}\cdot 2^{\frac{1}{3}}\\
    &\leq 1+O\left(\frac{1}{(1-v)^2}\right)+O\left(\frac{\log \frac{1}{1-v}}{(1-v)^{\frac{10}{3}}}  \right).
\end{align*}
This proves the theorem.~\qed
\end{proof}
\begin{theorem}
\label{thm:NoSpeedAwayS/RAlgorithmLB}
The competitive ratio of the NoSpeedAway S/R model is at least 
$
1 + \Omega\left(\frac{1}{1 - v}\right).
$
\end{theorem}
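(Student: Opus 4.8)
\textbf{Proof proposal for Theorem~\ref{thm:NoSpeedAwayS/RAlgorithmLB}.}

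The plan is to prove the lower bound via an adversary argument analogous to the one used for the NoSpeed toward model (Theorem~\ref{NSTowardS/RTheorem}), but adapted to the fact that the target moves away and its speed $v$ is unknown. First I would observe that in any online algorithm, at some finite time the robots must have ``committed'' to exploring one side of the origin far enough to be able to catch the target on that side; since $v$ is unknown, the adversary retains the freedom to pick $v$ arbitrarily close to $1$, which forces the algorithm to travel a long way before capture can occur. The key quantity to track is the first time $\tau$ at which one of the two robots reaches a point at distance $d$ from the origin on a given side (call it the ``discovery'' of that side); as in the toward-model proof, at least one of $-d$ or $d$ must eventually be reached, and the adversary places the target on the side that is reached \emph{second} (or, if only one side is ever reached, on the unreached side, which gives an unbounded ratio and is therefore not the adversary's best finite choice — so effectively on the later side).

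The core of the argument: consider the robot $R_1$ (whichever of $S,R$) that first reaches distance $d$ on side $\sigma$, at time $\tau_\sigma$. The adversary places the target at distance $d$ on side $\sigma$, moving outward with speed $v$. By time $\tau_\sigma$ the target is at distance $d + v\tau_\sigma \ge d + v \cdot d = d(1+v)$ from the origin (using $\tau_\sigma \ge d$ since the robot has unit speed and must travel at least $d$). More importantly, to actually \emph{capture} — which requires \emph{both} robots to be colocated with the target — one has to account for the chasing dynamics: a unit-speed robot starting at distance $D$ behind a target moving outward at speed $v$ needs time $D/(1-v)$ to catch it, and meanwhile the target moves a further $vD/(1-v)$. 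Setting $D = \Theta(d)$ (a lower bound on how far the relevant robot is from the target at the moment its side is confirmed) yields a capture time of order $d/(1-v)$, against an optimal offline time of $d/(1-v)$ for the robot that heads straight to the correct side; but the online cost incurred \emph{before} committing to side $\sigma$ is at least a constant times $d$, and the extra chase contributes an additional $\Omega(d/(1-v))$ relative term, so $CR \ge 1 + \Omega(1/(1-v))$. I would make this precise by letting the adversary choose $v = 1 - \epsilon$ and showing that $T_A/T_{opt} \ge 1 + c/\epsilon$ for a universal constant $c>0$, for every online algorithm $A$.

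The step I expect to be the main obstacle is making rigorous the claim that the online algorithm must incur cost genuinely in excess of $d/(1-v)$ before it can be sure which side to commit to — i.e., ruling out a clever algorithm that somehow "splits" effort so efficiently that the overhead is lower-order in $1/(1-v)$. The cleanest way around this is the symmetry/potential argument: track, as a function of time $t$, the maximum over the two sides of the quantity (distance the two robots jointly still need to travel on that side to achieve a colocated capture, assuming the target sits at distance $d$ there with speed $v$). At $t=0$ this potential is at least $\Theta(d/(1-v))$ on both sides simultaneously, and the robots' combined speed is only $2$, so this potential on the adversary-chosen side cannot have dropped below $\Theta(d/(1-v)) - 2t$; choosing the side whose potential is still largest and letting $v\to 1$ forces the capture time to exceed $T_{opt}$ by a term that scales like $1/(1-v)$. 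I would then combine the two sub-bounds (pre-commitment travel cost $\Omega(d)$ plus post-commitment chase cost $\Omega(d/(1-v))$) to conclude $CR \ge 1 + \Omega(1/(1-v))$, which is the claimed $1 + \Omega\!\left(\frac{1}{1-v}\right)$.
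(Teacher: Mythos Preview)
Your adversary strategy in paragraph~2 has a concrete gap: the threshold $d$ is too small, and with it the argument only yields a constant competitive ratio, not $1+\Omega\bigl(\tfrac{1}{1-v}\bigr)$. To see why, run your adversary against the simple trajectory where the two robots head in opposite directions at unit speed. Both points $\pm d$ are reached at time $d$; whichever side the adversary then declares to hold the target, the far robot is at distance $2d$ behind it and catches up at time $\tfrac{3d}{1-v}$, giving ratio exactly $3$. In your accounting, the ``pre-commitment'' cost $\Theta(d)$ is only a $(1-v)$-fraction of $T_{opt}=\tfrac{d}{1-v}$, and the ``extra chase'' of $\Theta(d/(1-v))$ is the same order as $T_{opt}$; neither term produces a $\tfrac{1}{1-v}$ factor in the ratio. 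The potential argument in paragraph~3 does not repair this: with the potential you describe, the bound $\Phi_\sigma(t)\ge \Phi_\sigma(0)-2t$ together with $\Phi_\sigma(T)=0$ just recovers $T\ge T_{opt}$, which is trivial.

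The missing idea is that the adversary must wait until a robot has reached distance (roughly) $\tfrac{d}{1-v}$, not $d$. This is the right scale because capture of an outward-moving target at speed $v$ necessarily occurs at a point at distance $\ge \tfrac{d}{1-v}$ from the origin, so any correct algorithm must eventually send some robot that far. The paper's proof fixes $\epsilon>0$, lets $t_1$ be the first time either robot reaches distance $\tfrac{d}{1-v}-\epsilon$ (on either side; up to that moment no robot can have met the target, so the trajectories are independent of the target's side), and places the target on the opposite side. That ``first'' robot is then at distance about $\tfrac{2d}{1-v}$ behind the target and needs additional time about $\tfrac{2d}{(1-v)^2}$ to reach it, so the total capture time is at least $\tfrac{d}{1-v}+\tfrac{2d}{(1-v)^2}-O(\epsilon)$, giving $CR\ge 1+\tfrac{2}{1-v}$ as $\epsilon\to 0$. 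Replacing your threshold $d$ by $\tfrac{d}{1-v}-\epsilon$ and redoing your chase computation with that starting gap is exactly what is needed.
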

\begin{proof} 
Consider an algorithm $A$ solving the capturing problem for two robots. Let $d$ be the starting distance of the target from the origin and $v<1$ its speed. For any $\epsilon > 0$ but arbitrarily small consider the first time one of the two robots reaches a distance $\frac d{1-v} - \epsilon$ away from the origin, following algorithm $A$, in either direction from the origin of the line. Let's call this the first robot. The adversary places the mobile in the opposite direction on the line. Regardless of where the second of the two robots is, the first robot will be distance $\frac{2d}{1-v} - 2\epsilon$ away from current position of the mobile. Therefore, the first robot needs time $\frac{d}{1-v} - \epsilon$ plus additional time 
\begin{equation}
\label{eq:srlb1}
\frac{\frac{2d}{1-v} - 2\epsilon}{1-v} = \frac{2d}{(1-v)^2}- \frac{2\epsilon}{1-v}.
\end{equation}
to move on the other side of the line where the mobile is. Consolidating the terms and simplifying Equation~\eqref{eq:srlb1} above we see that the total time for the first robot to reach the mobile (which is moving on the other side of the line) will be at least 
\begin{equation}
\label{eq:srlb2}
\frac{d-2\epsilon}{1-v} - \epsilon + \frac{2d}{(1-v)^2}.
\end{equation}
As a consequence, the resulting competitive ratio of algorithm $A$ must satisfy
\begin{equation}
\label{eq:srlb3}
CR \geq \frac{\frac{d-2\epsilon}{1-v} - \epsilon + \frac{2d}{(1-v)^2}}{\frac{d}{1-v}} .
\end{equation}
Clearly, for $d, v$ constants, the righthand side of Inequality~\eqref{eq:srlb3} converges to $1+ \frac{2}{1-v}$, as $\epsilon \to 0$, which proves the lower bound.
~\qed
\end{proof}
\section{The NoKnowledge Model}
\label{sec: The NoKnowledge Model}

In this section, we consider the NoKnowledge model (neither $d$ nor $v$ is known to the robots) and distinguish the cases where the target is moving away from or toward the origin.

\subsection{Target moving toward the origin}
\begin{theorem}
\label{thm:Target moving toward the origin}
    The optimal competitive ratio is $1+\frac{1}{v}$ and is given by the waiting algorithm. 
\end{theorem}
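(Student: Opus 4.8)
The plan is to prove both an upper bound and a matching lower bound of $1+\frac1v$ for the NoKnowledge toward model. For the upper bound, the natural candidate is the waiting algorithm: both robots stay at the origin. Since the target moves toward the origin with speed $v$ starting from distance $d$, it reaches the origin (where the robots wait) in time $\frac{d}{v}$, at which point capture occurs. The offline optimum is $\frac{d}{1+v}$, since in the best case a robot moves toward the target while the target moves toward it, closing the gap at combined speed $1+v$. Hence the ratio is $\frac{d/v}{d/(1+v)} = \frac{1+v}{v} = 1 + \frac1v$. Crucially this holds regardless of $d$ and $v$, so it is valid in the NoKnowledge model. I would note that the waiting algorithm requires no knowledge at all, which is exactly why it is admissible here, and that this recovers the bound already stated for the waiting algorithm earlier in the excerpt.

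For the lower bound, I would use an adversary argument showing that no online algorithm can beat $1+\frac1v$. The key idea: in the NoKnowledge model the robots know neither $d$ nor $v$. Fix an algorithm $A$ and consider an adversary who will choose $v$ to be extremely small (tending to $0$) and $d$ to be very large. For tiny $v$, the target is essentially static over any bounded time horizon, so this reduces (in the limit) to the static-target lower bound situation — but more carefully, the adversary watches the robots' trajectories and places the target at distance $d$ on whichever side is "less explored," specifically at a point the robots have not yet reached, and slightly adjusts $d$ so that the target arrives at the robots' position only after a long delay. The effect is that for the competitive ratio not to blow up, a robot must essentially travel toward the target's side and meet it; but by making $v\to 0$, the benefit of the target's motion toward the origin vanishes in the optimal cost while the waiting penalty $\frac{d}{v}$ is the only way to avoid committing to a direction. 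The cleanest route: show that any algorithm that ever moves a robot a positive distance $\delta$ in the "wrong" direction can be forced by the adversary (choosing $d$ appropriately and $v$ small) to incur ratio approaching $1+\frac1v$ anyway, while an algorithm that waits achieves exactly $1+\frac1v$; combined with a limiting argument this pins the optimum.

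More concretely, for the lower bound I would argue: consider any algorithm and any $\epsilon>0$. Look at the behavior over a long time; the adversary picks $v$ small. By a standard argument (as in the NoSpeed toward lower bound, Theorem~\ref{NSTowardS/RTheorem}), one of the two points $\pm d$ must eventually be "approached" by a robot or else the target is never captured — but now $d$ itself is unknown, so the adversary can always pick $d$ larger than any distance a robot reaches within a given deadline, forcing the capture to rely on the target reaching a robot. Since the target needs time $\geq \frac{d - (\text{max robot displacement toward it})}{v}$ and, with $v$ small, the robots cannot displace meaningfully relative to $d/v$, the capture time is $\geq \frac{d}{v}(1 - o(1))$ while $T_{opt} = \frac{d}{1+v}$, giving $CR \geq (1+v)/v \cdot (1-o(1)) \to 1+\frac1v$. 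I would formalize "the robots cannot help" by noting that in time $\frac{d}{v}$ a robot travels at most $\frac{d}{v}$, which is unbounded, so I must instead fix the comparison: I would let the adversary commit to $d$ and $v$ only after seeing a finite prefix, choosing $d = v \cdot T$ where $T$ is a time by which neither robot has traveled more than, say, $T/2$ net toward one side (such $T$ exists for small enough $v$ by continuity of trajectories and the pigeonhole over the two sides), and conclude capture cannot happen before time $\approx T = d/v$.

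The main obstacle is making the lower-bound adversary argument fully rigorous: unlike the NoSpeed case where $d$ is known and the adversary simply exploits the unexplored point $-d$ or $d$, here both parameters are free, and one must carefully interleave the choice of $d$ (large), the choice of $v$ (small), and the robots' committed trajectory so that neither robot can "sit near the correct $\pm d$" in time. I expect the clean way is a two-stage adversary: first pick $v$ small, run the algorithm, observe which side each robot favors, then pick $d$ on the disfavored side large enough that the target's travel time $\frac{d}{v}$ dominates, and invoke the continuity/pigeonhole fact that at some finite time both robots are within distance less than the target's remaining distance. I would keep the write-up short since the upper bound is immediate from the waiting algorithm analysis already cited, and reference Theorem~\ref{NSTowardS/RTheorem}'s lower-bound technique as the template, adapted to exploit the additional freedom in choosing $d$.
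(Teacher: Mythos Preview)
Your upper bound is correct and matches the paper exactly: both robots wait, the target arrives at time $d/v$, and $T_{opt}=d/(1+v)$, giving $1+1/v$.

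For the lower bound, however, your approach is substantially more complicated than needed and you yourself flag it as not yet rigorous. The paper's argument is a one-shot adversary with no limits, no pigeonhole, and no $v\to 0$. Suppose the algorithm is not the waiting algorithm, so some robot first leaves the origin: after waiting time $t\ge 0$ it reaches position $x>0$ (say to the right) at time $t+x$. The adversary then places the target on the \emph{opposite} side at distance $d$ and chooses the speed $v=\frac{d}{t+x}$. With this choice the target reaches the origin \emph{exactly} at time $t+x$, when that robot is still at distance $x$ from the origin. Since capture requires both robots to be co-located with the target, the displaced robot must still close a gap of $x$, so the capture time is at least $t+x+\frac{x}{1+v}=\frac{d}{v}+\frac{x}{1+v}\ge \frac{d}{v}$, giving $CR\ge \frac{d/v}{d/(1+v)}=1+\frac1v$. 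Because $d$ is free, every $v>0$ arises this way, so the bound holds for each $v$, not only in a limit.

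The idea you were circling around (``let $d=v\cdot T$'') is exactly this synchronization, but you buried it under a two-stage adversary, $v\to 0$ asymptotics, and a pigeonhole over sides, none of which are needed. The missing crisp step is simply: pick $v=d/(t+x)$ so that the target's arrival at the origin coincides with a robot being away. That single equation replaces all of your limiting machinery.
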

\begin{proof} 
The upper bound is well-known~\cite{coleman2023line}.
    For the lower bound, consider an algorithm where the robot does not wait and instead moves a distance $x>0$ in a certain direction, after waiting at the origin for time $t>0$. Consider the scenario where the target with speed $v=\frac{d}{t+x}$ is at distance $d$ away from the origin in the opposite direction. Thus, the target reaches the origin at the time the robot reaches $x$. The earliest meeting point is:
        $t+x+\frac{x}{1+v}=\frac{d}{v}+\frac{x}{1+v}\geq \frac{d}{v} $.
    It follows that the competitive ratio is at least $\frac{\frac{d}{v}}{\frac{d}{1+v}}=1+\frac{1}{v}$.
    This completes the proof of the theorem.~\qed
\end{proof}

\subsection{Target moving away from the origin}

The target moves away from the origin in either direction.  Let us consider two monotone increasing sequences of non-negative integers $\{f_{i}, g_{i} : i \geq 0\}$. The idea is to try to guess the speed and the initial distance of the target. The algorithm works through iterations. We will use the guesses $v_{i} = 1 - 2^{-f_{i}}$ and $d_{i} = 2^{g_{i}}$. At the beginning, both robots $S$ and $R$ are situated at the origin.

 During each iteration $i$, robot $S$ moves in one direction with a guess of $v_{2i}$ for the speed and $d_{2i}$ for the distance. $R$ moves in the other direction with a guess of $v_{2i+1}$ for the speed and $d_{2i+1}$ for the distance. If $S$ finds the target first, it communicates with $R$ to proceed to the target. Otherwise, if $R$ finds the target first, it moves with the target and waits for $S$ to proceed to the target.

If the target is not found by either $S$ or $R$, then each robot returns back until they meet. At the meeting point, $R$ would go back with $S$ to confirm whether the target is caught with a guess of $v_{2i+1}$ for the speed and $d_{2i+1}$ for the distance on the side where $S$ was moving. If the target is not found by $S$ and $R$, then $R$ reverses its direction and proceeds with a guess of $v_{2i+3}$ for the speed and $d_{2i+3}$ for the distance, while $S$ continues in the same direction with a guess of $v_{2i+2}$ for the speed and $d_{2i+2}$ for the distance. 

\begin{algorithm}[H]
\small
\caption{NoKnowledgeAway S/R}
\label{NoKnowledgeAwayS/RAlgorithm}
\begin{algorithmic}[1]
\State \textbf{Given:} Increasing integer sequences $f_i$, $g_i$ with $f_0 = 1$, $g_0 = 0$, $f_i < f_{i+1}$, $g_i < g_{i+1}$, and initial times $t_{1} = t_{2} = 0$
\For{$i \gets 0$ to $\infty$ until target is found} 
  \State $S$: Set $v_{2i} = 1 - 2^{-f_{2i}}$, $d_{2i} = 2^{g_{2i}}$, $x_{2i} = \frac{d_i + t_1 v_{2i}}{1 - v_{2i}}$
  \State Move left, then return to meet $R$
  \If{no meeting with $R$}
    \State Continue to pursue target
  \Else
    \State Reverse and update to $v_{2i+2}, d_{2i+2}$
  \EndIf
  \If{target found} \State Move with target \EndIf

  \State $R$: Set $v_{2i+1} = 1 - 2^{-f_{2i+1}}$, $d_{2i+1} = 2^{g_{2i+1}}$, $x_{2i+1} = \frac{d_i + t_2 v_{2i+1}}{1 - v_{2i+1}}$
  \State Move right, then return to meet $S$
  \If{no meeting with $S$}
    \State Continue to pursue target
  \Else
    \State Continue with $S$ using $v_{2i+1}, d_{2i+1}$
    \If{target not found}
      \State Reverse and update to $v_{2i+3}, d_{2i+3}$
    \EndIf
  \EndIf
  \If{target found} \State Move with target \EndIf

  \State $t_1 \gets t_1 + 2|x_{2i}| + 2 \times$(round-trip time with $R$)
  \State $t_2 \gets t_2 + 2|x_{2i+1}| + 2 \times$(time with $S$ for guess coverage and return)
\EndFor
\end{algorithmic}
\end{algorithm}

\begin{theorem}
\label{thm:NoKnowledgeAwayS/RAlgorithm}
    The competitive ratio of Algorithm~\ref{NoKnowledgeAwayS/RAlgorithm} is upper bounded by 
$    
1 + O\left( M^{\frac{16}{3}} \log(M) \log \log^{\frac{3}{2}} M \right)
$    
where $M = \max\{d,\frac{1}{1-v}\}$.    
\end{theorem}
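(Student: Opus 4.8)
The plan is to mimic the analysis of Algorithm~\ref{NoSpeedAwayS/RAlgorithm} (Theorem~\ref{thm:NoSpeedAwayS/RAlgorithm}) but now carry an extra layer of guessing for the distance, so the recurrence governing the cumulative travel times $x_i$ is driven by the product of the speed-guess blow-up factor $2^{f_i}$ and the distance-guess factor $2^{g_i}$. First I would set up the recurrence: at iteration $\tfrac{i+1}{2}$, the time $x_i$ that robot $R$ spends moving out with guesses $(v_i,d_i)$, meeting $S$, going back over $S$'s side and returning, satisfies an expression of exactly the same shape as in the NoSpeed proof, except that every occurrence of the ``$d$'' inside the bracket is replaced by the current distance guess $d_i = 2^{g_i}$. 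Solving the resulting linear recurrence as before yields a bound of the form $x_{i+2} \le C\cdot 2^{f_{i+3}}\cdot 2^{g_{i+3}}\cdot x_i$ for an absolute constant $C$ (the same $6$ that appeared before, up to constants), and unrolling gives $x_{i+2} \le 2^{\sum f_{2j+1}}\cdot 2^{\sum g_{2j+1}}\cdot C^{(i+1)/2}\cdot O(1)$.

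Next I would pin down the termination index. The target is actually captured once both guesses are simultaneously over-estimates, i.e. once $2^{f_{i-1}} \ge \tfrac1{1-v}$ and $2^{g_{i-1}} \ge d$; interleaving the two sequences appropriately, the worst case is $2^{f_{i-1}},2^{g_{i-1}} = \Theta(M)$ with $M = \max\{d,\tfrac1{1-v}\}$, so $i = \Theta(\log\log M)$ when we take (as in the previous proof) doubly-exponential guesses $f_i = g_i = 2^i$. Plugging $f_i=g_i=2^i$ into $\sum_{j\le i/2} f_{2j+1}$ and $\sum_{j\le i/2} g_{2j+1}$ gives, via the geometric-series computation of Equation~\eqref{equation9}, sums equal to $\tfrac{8}{3}2^i - \tfrac23$ each. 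The competitive ratio, exactly as in Theorem~\ref{thm:NoSpeedAwayS/RAlgorithm}, is $CR = 2\cdot 2^{f_i}\cdot 2^{g_i}/\text{(normalizing by }d) + 1 + (\text{cross terms in }\tfrac1d\sum x_{2j+1})$; substituting the bound on $\sum x_{2j+1}$ and cancelling the $2^{2^{i+1}}$ coming from $2^{f_{i+1}}$ in the denominator (the same cancellation as in Equations~\eqref{equation10} and~\eqref{equation11}) leaves an exponent of $2^i$ that, once re-expressed through $2^{f_{i-1}},2^{g_{i-1}} = O(M)$ i.e. $2^i = O(\log M)$, becomes a fixed power of $M$; bookkeeping the $6^{(i-1)/2} = 6^{O(\log\log M)} = (\log M)^{O(1)}$ factor and the residual $\log M$ from $2^{f_i}2^{g_i}$ should yield the claimed $1 + O\!\left(M^{16/3}\log(M)\log\log^{3/2}M\right)$.

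The exponent arithmetic is the crux: in the NoSpeed case the single guess contributed $2^{8/3\cdot 2^i}$ in the numerator against $2^{2^{i+1}}$ in the denominator, netting $2^{2/3\cdot 2^i}$ and, after the further $2^{f_i}$ factors, the $\tfrac{10}{3}$ exponent. Here both $f$ and $g$ contribute, so the numerator carries $2^{(8/3)2^i}\cdot 2^{(8/3)2^i} = 2^{(16/3)2^i}$ while the denominator still only cancels one $2^{2^{i+1}}$ worth (the meeting structure is unchanged), so the net exponent on $2^{2^i} = M^{O(1)}$ roughly doubles to $\tfrac{16}{3}$ after accounting for the extra $2^{f_i}2^{g_i} = M^{O(1)}$ prefactor, and the $6^{(i-1)/2}$ term becomes $(\log M)^{O(\log 6)} $ which one bounds crudely by $\log\log^{3/2}M$ times a constant after the appropriate choice of sequence growth. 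The main obstacle I anticipate is getting the interleaving of the two guess sequences right so that the ``first time both guesses are simultaneously sufficient'' index is genuinely $\Theta(\log\log M)$ rather than something larger, and making sure the round-trip ``confirm on $S$'s side'' step does not introduce an extra multiplicative factor that depends on $d$ separately from $M$; handling this cleanly likely requires choosing $f_i$ and $g_i$ to grow at the same doubly-exponential rate and arguing that the capture condition is met at the first $i$ with $\min\{2^{f_{i-1}},2^{g_{i-1}}\} \ge M$, after which the stated bound follows by the same limiting argument ($d,v$ constants, index fixed) used at the end of Theorem~\ref{thm:NoSpeedAwayS/RAlgorithm}.
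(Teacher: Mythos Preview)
Your overall plan (mirror the NoSpeed analysis with an extra layer of distance guessing, pick $f_i=g_i=2^i$, read off the termination index from $M=\max\{d,\tfrac1{1-v}\}$) matches the paper, but the heart of your argument---the shape of the recurrence for $x_i$---is wrong, and this error propagates to the exponent.

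In the NoSpeed proof the constant $d$ sits inside the bracket and produces the additive term in the recurrence; here that $d$ is replaced by the growing guess $2^{g_i}$. The effect is still \emph{additive}: working exactly as in Theorem~\ref{thm:NoSpeedAwayS/RAlgorithm} one gets
\[
x_{i+2}\ \le\ 6\cdot 2^{f_{i+3}}\,x_i \;+\; 4\cdot 2^{f_{i+3}}\,2^{g_{i+2}},
\]
i.e.\ the distance guess enters as an inhomogeneous source term, not as a multiplicative factor on $x_i$. Your claimed form $x_{i+2}\le C\cdot 2^{f_{i+3}}\cdot 2^{g_{i+3}}\cdot x_i$ is at best a very loose over-estimate. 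Unrolling the correct (additive) recurrence gives
\[
x_{i+2}\ \le\ \sum_{k=0}^{(i+2)/2} 4\cdot 2^{\,g_{2k}+\sum_{j=k}^{(i+2)/2} f_{2j+1}}\cdot 6^{(i-2k+2)/2},
\]
so only a \emph{single} $g_{2k}$ appears in each summand; bounding by the largest one costs a factor $\bigl(\tfrac{i}{2}+1\bigr)\cdot 2^{g_i}$, not $2^{\sum g_{2j+1}}$. With $f_i=g_i=2^i$ this is the difference between $2^{2^i}$ and $2^{(8/3)2^i}$, i.e.\ a saving of $2^{(5/3)2^i}$. Your multiplicative unrolling would give a dominant term of order $2^{(13/3)2^i}$ after the $2^{f_i}$ cancellation, which with $2^{i}\le 2\log M$ is $M^{26/3}$, not $M^{16/3}$; the stated bound is unreachable along your route. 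The paper's $\tfrac{16}{3}$ comes precisely from $2^{g_i}\cdot 2^{\sum f_{2j+1}}/2^{f_i}=2^{(8/3)2^i}$ together with $2^i\le 2\log M$.

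A smaller point: $6^{(i-1)/2}$ with $i=\Theta(\log\log M)$ is $(\log M)^{\Theta(1)}$, not $O(\log\log^{3/2}M)$; the $\log\log^{3/2}M$ factor in the statement comes from the count $\bigl(\tfrac{i}{2}+1\bigr)$ of summands, while $6^{i/2}$ is what supplies the polylogarithmic-in-$M$ factor.
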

\begin{proof} 
   Assume that $x_{i}$ represents the time required by $R$ to move to the right with a guess for the speed $v_{i}$ and the initial distance $d_{i}$, then return to the origin. Additionally, it includes the time required to move to the left after meeting $S$ and return to the origin again, both using the same guesses for the speed $v_{i}$ and the initial distance $d_{i}$. Thus, at iteration $i$, $x_{i}$ can be expressed as follows:
   \begin{align*}
       x_{i} &= 2 \cdot 2^{f_{i}} \cdot \left(2^{g_{i}} + v_{i} \sum_{j=0}^{\frac{i-3}{2}} x_{2j+1}\right) 
       + 2 \cdot 2^{f_{i}} \cdot \left(2 \cdot 2^{f_{i}} \cdot \left(2^{g_{i}} + v_{i} \sum_{j=0}^{\frac{i-3}{2}} x_{2j+1}\right)\right) 
       + 2 \cdot 2^{f_{i}} \cdot 2^{g_{i}} \\
       &= \left(2 \cdot 2^{f_{i}} + 1\right) \cdot \left(2 \cdot 2^{f_{i}} \cdot \left(2^{g_{i}} + v_{i} \sum_{j=0}^{\frac{i-3}{2}} x_{2j+1}\right)\right) 
       + 2 \cdot 2^{f_{i}} \cdot 2^{g_{i}} \\
       &= 2 \cdot 2^{f_{i}} \cdot 2^{g_{i}} + 2 \cdot 2^{f_{i}} \cdot v_{i} \sum_{j=0}^{\frac{i-3}{2}} x_{2j+1} 
       + 4 \cdot 2^{f_{i+1}} \cdot 2^{g_{i}} + 4 \cdot 2^{f_{i+1}} \cdot v_{i} \sum_{j=0}^{\frac{i-3}{2}} x_{2j+1} 
       + 2 \cdot 2^{f_{i}} \cdot 2^{g_{i}}.
   \end{align*}
   
   Since $v_{i} = 1 - 2^{-f_{i}}$, we get for $x_i$
   \begin{align*}
       x_{i} &= 4 \cdot 2^{f_{i}} \cdot 2^{g_{i}} + 4 \cdot 2^{f_{i+1}} \cdot 2^{g_{i}} 
       - 2 \cdot 2^{f_{i}} \sum_{j=0}^{\frac{i-3}{2}} x_{2j+1} 
       - 2 \sum_{j=0}^{\frac{i-3}{2}} x_{2j+1} 
       + 4 \cdot 2^{f_{i+1}} \sum_{j=0}^{\frac{i-3}{2}} x_{2j+1}.
   \end{align*}
   We conclude the following:
   \begin{align*}
       \sum_{j=0}^{\frac{i-3}{2}} x_{2j+1} &= \frac{x_{i} - 4 \cdot 2^{g_{i}} \cdot 2^{f_{i}} - 4 \cdot 2^{g_{i}} \cdot 2^{f_{i+1}}}{4 \cdot 2^{f_{i+1}} - 2 \cdot 2^{f_{i}} - 2}
          \end{align*}

       Replacing $i$ with $i+2$, we get:
        \begin{align*}
       \sum_{j=0}^{\frac{i-1}{2}} x_{2j+1} &= \frac{x_{i+2} - 4 \cdot 2^{g_{i+2}} \cdot 2^{f_{i+2}} - 4 \cdot 2^{g_{i+2}} \cdot 2^{f_{i+3}}}{4 \cdot 2^{f_{i+3}} - 2 \cdot 2^{f_{i+2}} - 2}, 
          \end{align*}

        \begin{align*}
         x_{i} &= \sum_{j=0}^{\frac{i-1}{2}} x_{2j+1}- \sum_{j=0}^{\frac{i-3}{2}} x_{2j+1}\\
       &= \frac{x_{i+2} - 4 \cdot 2^{g_{i+2}} \cdot 2^{f_{i+2}} - 4 \cdot 2^{g_{i+2}} \cdot 2^{f_{i+3}}}{4 \cdot 2^{f_{i+3}} - 2 \cdot 2^{f_{i+2}} - 2} 
       - \frac{x_{i} - 4 \cdot 2^{g_{i}} \cdot 2^{f_{i}} - 4 \cdot 2^{g_{i}} \cdot 2^{f_{i+1}}}{4 \cdot 2^{f_{i+1}} - 2 \cdot 2^{f_{i}} - 2}.
   \end{align*}
      \begin{align*}
       \frac{x_{i+2}}{4 \cdot 2^{f_{i+3}} - 2 \cdot 2^{f_{i+2}} - 2} 
       &\leq \left(1 + \frac{1}{4 \cdot 2^{f_{i+1}} - 2 \cdot 2^{f_{i}} - 2}\right)x_{i} 
       + \frac{4 \cdot 2^{f_{i+2}} \cdot 2^{g_{i+2}} + 4 \cdot 2^{f_{i+3}} \cdot 2^{g_{i+2}}}{4 \cdot 2^{f_{i+3}} - 2 \cdot 2^{f_{i+2}} - 2} \\
       &\quad - \frac{4 \cdot 2^{f_{i}} \cdot 2^{g_{i}} + 4 \cdot 2^{f_{i+1}} \cdot 2^{g_{i}}}{4 \cdot 2^{f_{i+1}} - 2 \cdot 2^{f_{i}} - 2} \\
       &\leq \left(1 + \frac{1}{4 \cdot 2^{f_{i+1}} - 2 \cdot 2^{f_{i}} - 2}\right)x_{i} + 2^{g_{i+2}}.
   \end{align*}
   We conclude the following:
   \begin{align*}
       x_{i+2} 
       &\leq 4 \cdot 2^{f_{i+3}} \left(1 + \frac{1}{2 \cdot 2^{f_{i+1}}}\right)x_{i} + 4 \cdot 2^{g_{i+2}} \cdot 2^{f_{i+3}} \\
       &\leq 4 \cdot 2^{f_{i+3}}x_{i} + 2 \cdot 2^{f_{i+3} - f_{i+1}}x_{i} + 4 \cdot 2^{f_{i+3}} \cdot 2^{g_{i+2}} \\
       &\leq 6 \cdot 2^{f_{i+3}}x_{i} + 4 \cdot 2^{f_{i+3}} \cdot 2^{g_{i+2}} \\
       &\leq \sum_{k=0}^{\frac{i+2}{2}} 4 \cdot 2^{g_{2k} + \sum_{j=k}^{\frac{i+2}{2}} f_{2j+1}} \cdot 6^{\frac{i - 2k + 2}{2}}.
   \end{align*}
   The capture time $T$ would be as follows:
   \begin{align*}
       &\sum_{j=0}^{\frac{i-3}{2}} x_{2j+1} + 2 \cdot \left(2^{f_{i}} \left(2^{g_{i}} + v_{i} \cdot \sum_{j=0}^{\frac{i-3}{2}} x_{2j+1}\right)\right) 
       + \frac{d}{1-v} + \frac{\sum_{j=0}^{\frac{i-3}{2}} x_{2j+1} v}{1-v} \\
       &\quad + \frac{2 \cdot 2^{f_{i}} \cdot 2^{g_{i}}v}{1-v} 
       + \frac{2 \cdot 2^{f_{i}} \cdot v_{i} \cdot \sum_{j=0}^{\frac{i-3}{2}} x_{2j+1} v}{1-v} \\
       &\leq \sum_{j=0}^{\frac{i-3}{2}} x_{2j+1} + 2 \cdot 2^{f_{i}} \cdot 2^{g_{i}} 
       + 2 \cdot 2^{f_{i}} \cdot v_{i} \cdot \sum_{j=0}^{\frac{i-3}{2}} x_{2j+1} \\
       &\quad + \frac{d}{1-v} 
       + v \cdot \frac{\sum_{j=0}^{\frac{i-3}{2}} x_{2j+1} + 2 \cdot 2^{f_{i}} \cdot 2^{g_{i}} + 2 \cdot 2^{f_{i}} \cdot v_{i} \sum_{j=0}^{\frac{i-3}{2}} x_{2j+1}}{1-v}.   
   \end{align*}
   Since $v_{i} = 1 - 2^{-f_{i}}$, we get:
   \begin{align*}
    T &\leq \frac{2 \cdot 2^{f_{i}} \cdot \sum_{j=0}^{\frac{i-3}{2}} x_{2j+1} - \sum_{j=0}^{\frac{i-3}{2}} x_{2j+1} + d + 2\cdot 2^{f_{i}}2^{g_{i}}}{1-v}.
   \end{align*}
   We conclude that the competitive ratio is as follows:
   \begin{align*}
       CR &= \frac{2 \cdot 2^{f_{i}}2^{g_{i}}}{d} + 1 + \frac{2 \cdot 2^{f_{i}} \cdot \sum_{j=0}^{\frac{i-3}{2}} x_{2j+1}}{d} - \frac{\sum_{j=0}^{\frac{i-3}{2}} x_{2j+1}}{d} \\
       &\leq \frac{2 \cdot 2^{f_{i}}2^{g_{i}}}{d} + 1 + \frac{2 \cdot 2^{f_{i}} x_{i}}{2 \cdot 2^{f_{i+1}}} 
       \leq \frac{2 \cdot 2^{f_{i}}2^{g_{i}}}{d} + 1 + \frac{x_{i}}{2^{f_{i}}} \\
       &\leq \frac{2 \cdot 2^{f_{i}}2^{g_{i}}}{d} + 1 + \left(\frac{1}{2^{f_{i}}}\right)\left(\sum_{k=0}^{\frac{i}{2}} 4 \cdot 2^{g_{2k} + \sum_{j=k}^{\frac{i}{2}} f_{2j+1}} \cdot 6^{\frac{i - 2k}{2}}\right) \\
       &\leq 2\cdot 2^{f_{i}}2^{g_{i-1}} + 1 + 4\left(\frac{1}{2^{f_{i}}}\right)\left(\frac{i}{2} + 1\right) 2^{g_{i}} \cdot 2^{\sum_{j=0}^{\frac{i}{2}} f_{2j+1}} \cdot 6^{\frac{i}{2}}.
   \end{align*}
   Considering the fact that the target is found in iteration $i$, this gives $2^{2^{i-1}} \leq \frac{1}{1-v}$.
   Considering $g_{i} = f_{i} = 2^{i}$, we also have the following:
   \begin{align*}
       2^{g_{i}} &\leq \frac{1}{(1-v)^2}, \\
       \sum_{j=0}^{\frac{i}{2}} f_{2j+1} &= \sum_{j=0}^{\frac{i}{2}} 4^j \cdot 2 
       = 2 \cdot \frac{4^{\frac{i}{2}+1} - 1}{3} 
       = \frac{8 \cdot 4^{\frac{i}{2}}}{3} - \frac{2}{3} 
       = \frac{8 \cdot 2^{i}}{3} - \frac{2}{3}, \\
       6^{\frac{i}{2}} &\leq 6^{\log \log(\max(d,\frac{1}{1-v}))}
   \end{align*}
   We conclude that the competitive ratio becomes as follows:
   \begin{align*}
       CR &\leq 1 + 2\cdot 2^{3\cdot2^{i-1}} + 4\left(\frac{i}{2} + 1\right) \cdot 2^{g_{i}} \cdot 2^{\frac{5}{3} \cdot 2^{i}} \cdot 2^{-\frac{2}{3}} \cdot 6^{\frac{i}{2}} \\
       &\leq 1 + 2\cdot 2^{3\cdot2^{i-1}} + 2^{\frac{4}{3}}\left(\frac{i}{2} + 1\right) 2^{\frac{8}{3} \cdot 2^{i}} \cdot 6^{\frac{i}{2}} \\
       &\leq 1 + 2\cdot 2^{3\cdot2^{i-1}} + 2^{\frac{4}{3}}\left(\frac{i}{2} + 1\right) 2^{\frac{8}{3} \cdot 2^{i}} \cdot 6^{\frac{i}{2}}\\       
&\leq 1+2\cdot M^3+2^{\frac{4}{3}}\cdot \log \log^{\frac{3}{2}} M \cdot M^{\frac{16}{3}}\cdot 6^{\log\log M}  ,       
       \end{align*}
where $M = \max\{d,\frac{1}{1-v}\}$.
This final expression provides the upper bound for the competitive ratio, considering the dependence on the parameters of the problem. The result encapsulates the complexities introduced by the exponential growth of terms and the logarithmic dependencies on $v$. 
This completes the proof of the theorem.~\qed
\end{proof}
\section{Conclusion}
We considered the problem of capturing an oblivious moving target on an infinite line with two robots in the S/R model. Two cases were analyzed based on the target's movement: either moving towards or away from the origin. For each case, we took into account various constraints related to the knowledge the robots have about the target's speed and its initial distance from the origin. 
Establishing tight bounds for the scenario where the distance is unknown and the target is moving away from the origin remains an open problem. As a topic for future research it would be interesting to study competitive ratios in linear search for multi-robot systems and for capturing a mobile target with robots subjected to either crash or byzantine faults.

\bibliographystyle{abbrv}
\bibliography{refs}

\begin{thebibliography}{10}

\bibitem{alpern03}
S.~Alpern and S.~Gal.
\newblock {\em The theory of search games and rendezvous}, volume~55.
\newblock Springer, 2003.

\bibitem{baezayates1993searching}
R.~Baeza-Yates, J.~Culberson, and G.~Rawlins.
\newblock Searching in the plane.
\newblock {\em Information and Computation}, 106(2):234--252, 1993.

\bibitem{baezayates1995parallel}
R.~Baeza-Yates and R.~Schott.
\newblock Parallel searching in the plane.
\newblock {\em Computational Geometry}, 5(3):143--154, 1995.

\bibitem{beck1964linear}
A.~Beck.
\newblock On the linear search problem.
\newblock {\em Israel J. of Math.}, 2(4):221--228, 1964.

\bibitem{bellman1963optimal}
R.~Bellman.
\newblock An optimal search.
\newblock {\em SIAM Review}, 5(3):274--274, 1963.

\bibitem{ivanov24}
J.~R. Coleman, D.~Ivanov, E.~Kranakis, D.~Krizanc, and O.~Morales{-}Ponce.
\newblock Linear search for an escaping target with unknown speed.
\newblock In A.~A. Rescigno and U.~Vaccaro, editors, {\em IWOCA 2024: 35th International Workshop on Combinatorial Algorithms 1-3 July 2024 Ischia, Italy}, volume 14764 of {\em LNCS}, pages 396--407. Springer, 2024.

\bibitem{coleman2023line}
J.~R. Coleman, E.~Kranakis, D.~Krizanc, and O.~Morales{-}Ponce.
\newblock Line search for an oblivious moving target.
\newblock In E.~Hillel, R.~Palmieri, and E.~Rivi{\`{e}}re, editors, {\em 26th International Conference on Principles of Distributed Systems, {OPODIS} 2022, December 13-15, 2022, Brussels, Belgium}, volume 253 of {\em LIPIcs}, pages 12:1--12:19. Schloss Dagstuhl - Leibniz-Zentrum f{\"{u}}r Informatik, 2022.

\bibitem{DBLP:journals/ijfcs/CzyzowiczGKKNOS21}
J.~Czyzowicz, K.~Georgiou, E.~Kranakis, D.~Krizanc, L.~Narayanan, J.~Opatrny, and S.~M. Shende.
\newblock Search on a line by byzantine robots.
\newblock {\em Int. J. Found. Comput. Sci.}, 32(4):369--387, 2021.

\bibitem{czyzowicz2021groupevac}
J.~Czyzowicz, R.~Killick, E.~Kranakis, D.~Krizanc, L.~Narayanan, J.~Opatrny, D.~Pankratov, and S.~Shende.
\newblock Group evacuation on a line by agents with different communication abilities.
\newblock {\em ISAAC 2021}, pages 57:1--57:24, 2021.

\bibitem{PODC16}
J.~Czyzowicz, E.~Kranakis, D.~Krizanc, L.~Narayanan, and J.~Opatrny.
\newblock Search on a line with faulty robots.
\newblock {\em Distributed Computing}, 32(6):493--504, 2019.

\bibitem{DBLP:journals/dc/CzyzowiczKKNO19}
J.~Czyzowicz, E.~Kranakis, D.~Krizanc, L.~Narayanan, and J.~Opatrny.
\newblock Search on a line with faulty robots.
\newblock {\em Distributed Comput.}, 32(6):493--504, 2019.

\bibitem{gal1974minimax}
S.~Gal.
\newblock Minimax solutions for linear search problems.
\newblock {\em SIAM Journal on Applied Mathematics}, 27(1):17--30, 1974.

\bibitem{algowinJawharK24}
K.~Jawhar and E.~Kranakis.
\newblock Bike assisted evacuation on a line of robots with communication faults.
\newblock In {\em {ALGOWIN} 2024, Royal Holloway, University of London in Egham, United Kingdom, Sep. 5-6, 2024, Proceedings}, volume (to appear) of {\em LNCS}. Springer, 2024.

\bibitem{kranakis2024survey}
E.~Kranakis.
\newblock A survey of the impact of knowledge on the competitive ratio in linear search.
\newblock In {\em International Symposium on Stabilizing, Safety, and Security of Distributed Systems}, pages 23--38. Springer, 2024.

\bibitem{mccabe1974}
B.~J. McCabe.
\newblock Searching for a one-dimensional random walker.
\newblock {\em J. Applied Probability}, pages 86--93, 1974.

\end{thebibliography}
\end{document}